\documentclass[doublecolumn,10pt]{IEEEtran}
\usepackage{setspace}




\usepackage{amsfonts}
\usepackage{amsmath}
\usepackage{amssymb}
\usepackage{mathrsfs}
\usepackage{amstext}
\usepackage{graphicx}
\usepackage{subfigure}



\evensidemargin0.5cm
\font\msbm=msbm10

\newtheorem{Theorem}{Theorem}[section]
\newtheorem{lemma}[Theorem]{Lemma}
\newtheorem{corollary}[Theorem]{Corollary}

\newtheorem{proposition}[Theorem]{Proposition}

\newtheorem{remark}[Theorem]{Remark}
\def\mathbb#1{\hbox{\msbm{#1}}}

\newcommand{\ie}{{\em i.e., }}
\newcommand{\tr}{\operatorname{Tr}}
\newcommand{\st}{\operatorname{s.t.} \,}
\newcommand{\bl}{\left(}
\newcommand{\br}{\right)}


\newcommand{\R}{{\mathbb{R}}}

\newcommand{\C}{{\mathbb{C}}}

\renewcommand{\P}{{\mathbb{P}}}
\newcommand{\E}{{\mathbb{E}}}

\newcommand{\sgn}{\operatorname{sgn}}

\newcommand{\beq}{\begin{eqnarray}}
\newcommand{\eeq}{\end{eqnarray}}
\newcommand{\beqn}{\begin{eqnarray*}}
\newcommand{\eeqn}{\end{eqnarray*}}

\newcommand{\supp}{\operatorname{supp}}

\renewcommand{\Re}{\operatorname{Re}}
\renewcommand{\Im}{\operatorname{Im}}

\newcommand{\qed}{\rule{2.5mm}{2.5mm}}

\begin{document}
\title{Average Case Analysis of Multichannel Sparse Recovery Using Convex Relaxation
\thanks{Yonina C. Eldar is with the Technion---Israel
Institute of Technology, Haifa Israel. Email:
yonina@ee.technion.ac.il. Holger Rauhut is with the Hausdorff
Center for Mathematics and Institute for Numerical Simulation,
University of Bonn, Germany. \newline The work of Y.\ Eldar was
supported by the Israel Science Foundation under Grant no. 1081/07
and by the European Commission in the framework of the FP7 Network
of Excellence in Wireless COMmunications NEWCOM++ (contract no.
216715). H.\ Rauhut acknowledges support by the WWTF project
SPORTS (MA 07-004) and by the Hausdorff Center for Mathematics.} }
\author{Yonina C.~Eldar,~\IEEEmembership{Senior~Member,~IEEE} and Holger Rauhut
}

\date{\today}

\maketitle

\begin{abstract}
In this paper, we consider recovery of jointly sparse multichannel
signals from incomplete measurements. Several approaches have been
developed to recover the unknown sparse vectors from the given
observations, including thresholding, simultaneous orthogonal matching
pursuit (SOMP), and convex relaxation based on a mixed matrix
norm. Typically, worst-case analysis is carried out in order to
analyze conditions under which the algorithms are able to recover
any jointly sparse set of vectors. However, such an approach is
not able to provide insights into why joint sparse recovery is
superior to applying standard sparse reconstruction methods to
each channel individually. Previous work considered an average
case analysis of thresholding and SOMP by imposing a probability
model on the measured signals.
In this paper, our main focus is
on analysis of convex relaxation techniques. In particular, we
focus on the mixed $\ell_{2,1}$ approach to multichannel recovery.
We show that under a very mild condition on the sparsity and on
the dictionary characteristics, measured for example by the
coherence, the probability of recovery failure decays
exponentially in the number of channels. This demonstrates that
most of the time, multichannel sparse recovery is indeed superior
to single channel methods. Our probability bounds are valid and
meaningful even for a small number of signals. Using the tools we
develop to analyze the convex relaxation method, we also tighten
the previous bounds for thresholding and SOMP.
\end{abstract}
\vspace{0.5cm}
{\bf Key Words: Multichannel sparse recovery, mixed-norm optimization,
average performance, thresholding, simultaneous orthogonal matching pursuit}


\section{Introduction}

Recovery of sparse signals from a small number of measurements is
a fundamental problem in many different signal processing tasks
such as image denoising \cite{carota06-1}, analog-to-digital
conversion \cite{ME07,E08,ME09}, radar, compression, inpainting,
and many more. The recent framework of compressed sensing (CS),
founded in the works of Donoho \cite{do06-2}, Cand{\`e}s, Romberg and Tao
\cite{carota06-1}, studies acquisition methods as well as
efficient computational algorithms that allow reconstruction of a
sparse vector $x$ from linear measurements $y=Ax$, where $A\in
\R^{n \times N}$ is referred to as the measurement matrix. The key
observation is that $y$ can be relatively short, so that $n<N$,
and still contain enough information to recover $x$.

Determining the sparsest vector $x$ consistent with the data
$y=Ax$ is generally an NP-hard problem \cite{avdama97}. To
determine $x$ in practice, a multitude of efficient algorithms
have been proposed, \cite{avdama97,efhajoti04,tr04,CRT06,CT05},
which achieve high recovery rates. 
The basis pursuit (BP), or
$\ell_1$-minimization approach, is the most extensively studied
recovery method \cite{chdosa99,carota06-1,do06-2,ra05-7}. The use
of general purpose or specialized convex optimization techniques
\cite{bogokikslu07,efhajoti04} allows for efficient reconstruction
using this strategy. Although greedy methods, such as simple
thresholding or orthogonal matching pursuit (OMP), are faster in
practice, BP provides significantly better recovery guarantees. In
particular, there exist measurement matrices $A \in \R^{n \times
N}$ that allow for stable recovery of all $k$-sparse vectors as
long as $n \geq C k \log(N/k)$ where $C$ is a constant. Such
uniform recovery is not possible for simple thresholding or OMP
\cite{do06,ra07}. (We note, however, that the recent greedy
algorithms CoSaMP \cite{NT08} and ROMP \cite{NV08} are able to
provide such uniform guarantees.) In practice, the recovery rate
of BP when averaged over all random sparse vectors is typically
better than that predicted by the theory. This is due to the fact
that existing analysis considers the ability of BP to recover all
vectors $x$. On the other hand, in random simulations, the
worst-case instance of $x$ typically does not occur. Therefore,
considering the behavior of various recovery methods over random $x$
often leads to more characteristic behavior.

The BP principle as well as greedy approaches have been extended
to the multichannel setup where the signal consists of several
channels with joint sparsity support
\cite{gisttr06,gisttr06-1,fora06,Cotter,Chen, ME08,EB08,EM082}. In
\cite{babadusawa05} the buzzword distributed compressed sensing
was coined for this setup.  An alternative approach is to first
reduce the problem to a single channel problem that preserves the
sparsity pattern, and recover the signal support set; given the
support, the measurements can be inverted to recover the input
\cite{ME08}. A variety of different recovery results have been
established that provide conditions ensuring that the output of
the proposed efficient algorithms coincides with the true signals.
In \cite{Chen} a recovery result was derived for a mixed
$\ell_{p,1}$ program in which the objective is to minimize the sum
of the $\ell_p$-norms of the rows of the estimated matrix whose
columns are the unknown vectors. Recovery results for the more
general problem of block-sparsity were developed in \cite{EM082}
based on the block restricted isometry property (RIP), and in
\cite{EB08} based on mutual coherence. In practice, multichannel
reconstruction techniques perform much better than recovering each
channel individually. However, the theoretical equivalence results
predict no performance gain. The reason is that these results
apply to all possible input signals, and are therefore worst-case
measures. Clearly, if we input the same signal to each channel,
then no additional information on the joint support is provided
from multiple measurements. Therefore, in this worst-case scenario
there is no advantage for multiple channels.

In order to capture more closely the true underlying behavior of
existing algorithms and observe a performance gain when using
several channels, we consider an average-case analysis. In this
setting, the inputs are considered to be random variables.
The idea is to develop conditions on the measurement matrix $A$
such that the inputs can be recovered with high probability given
a certain input distribution.

Recently, there have been several papers that consider sparse
recovery with random ensembles. In \cite{tr06-2} random
sub-dictionaries of $A$ are considered and analyzed. This allows
to obtain average results for BP with a single input channel.
In \cite{KV07}, average-case performance of single channel
thresholding was studied.
In \cite{grrascva07,grmarascva07} extensions to two
multichannel recovery algorithms were developed: thresholding and simultaneous
OMP (SOMP) \cite{grrascva07,grmarascva07}. Under a mild condition
on the sparsity and on the matrix $A$, the
probability of reconstruction failure decays exponentially with
the number of channels.
In the present paper we contribute to this line
of research by analyzing the average-case performance of
multichannel BP, \ie mixed $\ell_{2,1}$-minimization
\cite{gisttr06-1,fora06,EM082,EB08}. The tools we derive in this
context are then also used to slightly improve previous bounds on average
performance of multichannel thresholding and SOMP.

The theoretical average-case results we develop for multichannel
BP are superior to the average bounds developed on thresholding
and SOMP. For an equally mild or even milder condition on the
sparsity and on the matrix $A$, we obtain faster exponential decay
of the failure probability with respect to the number of channels.
Thus, in this sense, the extension of BP to the multichannel case
is superior to existing greedy algorithms, just as in the single
channel setting. Moreover, our
recovery results are applicable also in the single channel case
whereas previous results \cite{grrascva07} require a large number of channels to
yield meaningful (\ie positive) probability bounds (although our
new bound for thresholding generalizing the one in \cite{KV07} does not suffer from this drawback).
Note, however, that in simulations SOMP often exhibits the best
performance. This may be explained by the fact that
the bounds are not tight (at least for SOMP).

To develop our probability bounds, we rely on a new sufficient
condition that ensures recovery of the exact signal set via
$\ell_{2,1}$-minimization. This condition generalizes a result of
\cite{tr05-1,fu04} to the multichannel setting, and is weaker than
existing multichannel recovery conditions. Our average-case
analysis is then carried out assuming that the elements of the
input signal are drawn at random. We prove that under a certain
restriction on $A$ and the sparsity set $S$, the sufficient
condition we develop is satisfied with high probability. The
restriction we impose is that the $\ell_2$-norm of $A_S^\dagger
a_\ell$ over all $\ell$ not in the set $S$ is bounded, where
$a_\ell$ is the $\ell$th column of $A$, and $A_S^\dagger$ is the
pseudo inverse of the restriction of $A$ to the columns in $S$.
This is an improvement over known worst-case recovery conditions
which require a bound on the $\ell_1$-norm \cite{Chen,EB08},
and are therefore stronger. Loosely speaking, we will show that while worst-case
results limit the sparsity level to order $\sqrt{n}$, average-case
analysis shows that sparsity up to order $n$ may enable recovery
with high probability. In terms of RIP
constants, instead of bounding the RIP constant for sparsity sets
of size $2k$, we will only need to consider sets of size $k+1$.

The remaining of the paper is organized as follows. In
Section~\ref{sec:l1l2} we introduce our problem and briefly
summarize known equivalence results between the $\ell_{2,1}$
approach for multichannel recovery and NP-hard combinatorial
optimization that recovers the true signals. A new recovery
condition is derived in Section~\ref{sec:rec}, which is weaker
than previous results, and will be instrumental in developing our
average-case analysis in Section~\ref{sec:avg}. Since the
probability bounds we develop depend on the $2$-norm of
$A_S^\dagger a_{\ell}$, in Section~\ref{sec:bn} we derive several
upper bounds on this norm.
In Section~\ref{sec:compare} we use the tools developed in the
previous section to derive new bounds on the average performance
of thresholding and SOMP, that are tighter than existing results
and also applicable to a broader set of problems. We then compare
our bounds on multichannel BP to these results. Finally, in
Section~\ref{sec:sim} we present several simulations demonstrating
the behavior of the different methods.

Throughout the paper, we denote by $A_S$ the submatrix of $A$
consisting of the columns indexed by $S \subset \{1,\hdots,N\}$,
while $X^S$ is the submatrix of $X$ consisting of the rows of $X$
indexed by $S$. The $\ell$th column of $A$ is denoted by $a_\ell$
or $A_\ell$. For a matrix $A$, $\|A\|_2$ is the spectral norm of
$A$, \ie the largest singular value, and $A^*$ is its conjugate
transpose. The unit sphere in $\R^L$ is defined by $S^{L-1}= \{x
\in \R^L, \|x\|_2 = 1\}$; the complex counterpart is denoted
$S^{L-1}_\C= \{x \in \C^L, \|x\|_2 = 1\}$.

\section{Multichannel $\ell_1$-Minimization}
\label{sec:l1l2}

\subsection{Problem Formulation}

We consider multichannel signal recovery where our goal is to
recover a jointly-sparse matrix $X \in \C^{N\times L}$ from $n$
linear measurements per channel. Here $N$ denotes the signal
length and $L$ the number of channels, \ie the number of signals.
We assume that $X$ is jointly $k$-sparse, meaning that there are
at most $k$ rows in the matrix $X$ that are not identically zero.
More formally, we define the support of the matrix  $X$ as
\begin{equation}
\label{eq:support} \supp X = \bigcup_{\ell=1}^L \supp X_\ell,
\end{equation}
where the support of the $\ell$th column is
\begin{equation}
\label{eq:supportk} \supp X_\ell = \{j, X_{j \ell} \neq 0\}.
\end{equation}
Our assumption is that $\|X\|_0:= |\supp X| \leq k$. The measurements are given by
\begin{equation}
\label{eq:meas} Y = A X,\quad Y \in \C^{n \times L},
\end{equation}
where $A \in \C^{n\times N}$ is a given measurement matrix. Each
measurement vector $Y_\ell=AX_{\ell}$ corresponds to a measurement
of the corresponding signal $X_\ell$.

The natural approach to determine $X$ given $Y$ is to solve the $\ell_0$-minimization
problem
\begin{equation}
\label{eq:NP} \min_{X} \|X\|_0 \quad \st \quad AX=Y.
\end{equation}
However, (\ref{eq:NP}) is NP hard in general \cite{avdama97}.
Several alternative methods have been proposed, that have
polynomial complexity
\cite{gisttr06,gisttr06-1,fora06,Cotter,Chen,
ME08,EB08,EM082,ME08}. A variety of different equivalence results
between the solution of the $\ell_0$-problem and the
output of the proposed efficient algorithm. In \cite{Chen} an equivalence
result was derived for a mixed $\ell_{p,1}$ program in which
the objective is to minimize the sum of the $\ell_p$-norms of the
rows of the estimated matrix whose columns are the unknown
vectors.
 The condition is based on mutual coherence, and turns
out to be the same as that obtained from a single measurement
problem, so that the joint sparsity pattern does not lead to
improved recovery capabilities as judged by this condition.
Recovery results for the more general problem of block-sparsity
were developed in \cite{EM082} based on the RIP, and in
\cite{EB08} based on mutual coherence. Reducing these results to
the multiple measurement vectors (MMV) setting leads again to
conditions that are the same as in the single measurement case. An
exception is the work in \cite{grrascva07,grmarascva07} which
considers average-case performance of thresholding and SOMP. Under
a mild condition on the sparsity and on the matrix $A$, the
probability of reconstruction failure decays exponentially with
the number of channels $L$.
In Section~\ref{sec:compare}
we slightly improve on these bounds using the tools developed in this
paper.

 In
Section~\ref{sec:avg} we follow a similar approach and treat the
average behavior of the mixed $\ell_{2,1}$-minimization program
\cite{gisttr06-1,fora06,EM082} defined by
\begin{equation}\label{l1l2}
\min \|X\|_{2,1} = \sum_{j=1}^N \| X^j \|_2, \quad \mbox{ subject to } AX = Y,
\end{equation}
which promotes joint sparsity, as argued for instance in
\cite{fora06}. In the single channel case $L=1$ this is the usual
BP principle. Therefore, our results can also be used to deduce
the average-case behavior of the BP method. This is in contrast to
\cite{grrascva07}, in which the recovery results derived are not
applicable to the single channel case. As we discuss in
Section~\ref{sec:compare}, our theoretical results are superior to
the previous average-case analysis of \cite{grrascva07} in the
sense that we use an equally mild or even milder condition on the
sparsity and on the matrix $A$, but at the same time get a faster
exponential decay of the failure probability with respect to the
number of channels $L$.


\subsection{Recovery Results}

Recovery results for the program (\ref{l1l2}) were considered
in \cite{Chen,EM082,EB08}. In particular, the lemma below is
derived in \cite{Chen} and follows also from \cite{EB08} where the
more general case of block sparsity is considered.
\begin{proposition}
\label{prop:cond1}
 Let $S \subset \{1,\hdots,N\}$ and suppose that
\begin{equation}\label{cond1}
\|A_S^\dagger a_\ell\|_1 <  1 \quad \mbox{ for all } \ell \notin
S,
\end{equation}
with $A_S^\dagger=(A_S^*A_S)^{-1}A_S^*$ denoting the
pseudo-inverse of $A_S$. Then \eqref{l1l2} recovers all $X \in
\C^{N \times L}$ with $\supp X = S$ from $Y = AX$.
\end{proposition}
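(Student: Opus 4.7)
The plan is to adapt the dual certificate (``dual polynomial'') technique of Fuchs and Tropp from single-channel BP to the multichannel setting. I aim to show that the given $X$ is the unique minimizer of \eqref{l1l2} by exhibiting an auxiliary matrix $W \in \C^{N \times L}$ whose rows on $S$ match the subgradient of $\|\cdot\|_{2,1}$ at $X$, whose rows off $S$ have $\ell_2$-norm strictly less than $1$, and each of whose columns lies in the range of $A^*$ (so that the Frobenius pairing $\langle W, H\rangle_F$ vanishes for every $H$ with $AH = 0$).

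To construct $W$, introduce the sign matrix $Z \in \C^{|S| \times L}$ with rows $Z^j = X^j/\|X^j\|_2$ for $j \in S$, well-defined since $\supp X = S$. Set
$$
W \,:=\, A^* A_S (A_S^*A_S)^{-1} Z.
$$
Each column of $W$ lies in the range of $A^*$ by construction, and the identity $E_S^* A^* A_S = A_S^* A_S$, where $E_S$ selects the columns indexed by $S$, yields $W^S = Z$. For $\ell \notin S$, set $v_\ell := A_S^\dagger a_\ell \in \C^{|S|}$; then $W^\ell = v_\ell^* Z$, and the triangle inequality together with $\|Z^j\|_2 = 1$ delivers
$$
\|W^\ell\|_2 \,\leq\, \sum_{j \in S}|(v_\ell)_j|\,\|Z^j\|_2 \,=\, \|A_S^\dagger a_\ell\|_1 \,<\, 1
$$
by the hypothesis \eqref{cond1}.

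To close the argument, let $\tilde X \neq X$ be any other feasible matrix and set $H := \tilde X - X$, so $AH = 0$ and $H \neq 0$. The subgradient inequality $\|X^j + H^j\|_2 \geq \|X^j\|_2 + \Re\langle Z^j, H^j\rangle$ for $j \in S$, together with $\|X^\ell + H^\ell\|_2 = \|H^\ell\|_2$ for $\ell \notin S$, gives
$$
\|\tilde X\|_{2,1} \,\geq\, \|X\|_{2,1} + \Re\langle Z, H^S\rangle_F + \sum_{\ell \notin S}\|H^\ell\|_2.
$$
Since the columns of $W$ lie in the range of $A^*$ and $AH = 0$, one has $\langle W, H\rangle_F = 0$, and using $W^S = Z$ this rewrites $\Re\langle Z, H^S\rangle_F = -\sum_{\ell \notin S}\Re\langle W^\ell, H^\ell\rangle$; Cauchy--Schwarz with $\|W^\ell\|_2 < 1$ then yields $\|\tilde X\|_{2,1} \geq \|X\|_{2,1} + \sum_{\ell \notin S}(1 - \|W^\ell\|_2)\|H^\ell\|_2 \geq \|X\|_{2,1}$. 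Equality forces $H^\ell = 0$ for every $\ell \notin S$, after which $A_S H^S = AH = 0$ together with the invertibility of $A_S^*A_S$ (implicit in the existence of $A_S^\dagger$) forces $H = 0$, contradicting $H \neq 0$. The only genuinely new step beyond the scalar Fuchs/Tropp argument is the certificate bound $\|W^\ell\|_2 \leq \|A_S^\dagger a_\ell\|_1$, and it is nothing more than the triangle inequality with scalar signs replaced by unit-norm rows, which is precisely why the row-wise $\ell_1$ quantity in \eqref{cond1} is still exactly the right certificate condition in the mixed-norm setting.
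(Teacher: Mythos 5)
Your proof is correct. The certificate you build is exactly the one the paper uses: $W = A^*A_S(A_S^*A_S)^{-1}Z$ is $A^*H$ for the choice $H=(A_S^\dagger)^*\sgn(X^S)$ of Corollary~\ref{cor_reconstruct}, and your bound $\|W^\ell\|_2 \le \|A_S^\dagger a_\ell\|_1$ is verbatim the triangle-inequality computation by which the paper deduces Proposition~\ref{prop:cond1} from that corollary. Where you genuinely diverge is in how optimality and uniqueness are closed. The paper (Theorem~\ref{thm:sgn}) compares $\|X\|_{2,1}$ and $\|X'\|_{2,1}$ directly through the trace identity $\|X\|_{2,1}=\tr(H^*A_SX^S)=\tr(H^*AX')$ and the H\"older-type Lemma~\ref{lemma:tr}, and then must argue strictness by a separate case analysis showing that the support of any competitor $X'$ contains an index outside $S$ where $\|H^*a_\ell\|_2<1$ bites. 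You instead work with the difference $H=\tilde X - X$ in the null space of $A$, invoke the first-order convexity inequality $\|X^j+H^j\|_2 \ge \|X^j\|_2 + \Re\langle Z^j,H^j\rangle$ row by row, and use $\langle W,H\rangle_F=0$ to convert the on-support term into an off-support one, arriving at $\|\tilde X\|_{2,1} \ge \|X\|_{2,1} + \sum_{\ell\notin S}(1-\|W^\ell\|_2)\|H^\ell\|_2$. This buys a cleaner strictness argument --- equality immediately forces $H^\ell=0$ off $S$, and injectivity of $A_S$ (which is indeed implicit in the stated formula for $A_S^\dagger$) kills $H^S$ --- at the cost of being a perturbative rather than a direct comparison. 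Both are standard faces of the Fuchs--Tropp dual-certificate argument, and your version in fact establishes the stronger $2$-norm sufficient condition of Corollary~\ref{cor_reconstruct} along the way, since the only property of $W$ you use off the support is $\|W^\ell\|_2 = \|\sgn(X^S)^*A_S^\dagger a_\ell\|_2 < 1$.
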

Note, that the condition above does not depend on the number of
channels. In the next section we will derive a condition similar
to (\ref{cond1}) that involves the $2$-norm instead of the
$1$-norm, and is therefore weaker (namely, easier to satisfy).

Assuming the columns of $A$ are normalized, $\|a_\ell\|_2 = 1$, we
can guarantee that \eqref{cond1} holds as long as the coherence
$\mu$ of $A$ is small enough, where \cite{DH01}
\begin{equation}
\mu = \max_{j \neq \ell} |\langle a_j, a_\ell \rangle|.
\end{equation}
The following result follows from \cite{EB08} by noting that the
block coherence in this setting is equal to $\mu/d$.
\begin{proposition}
\label{prop:mu} Assume that
\begin{equation}\label{condmu1}
(2k-1) \mu < 1.
\end{equation}
Then \eqref{l1l2} recovers all $X$ with $\|X\|_0 \leq k$ from $Y =
AX$.
\end{proposition}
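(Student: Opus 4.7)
The plan is to reduce the claim to Proposition~\ref{prop:cond1} and then bound $\|A_S^\dagger a_\ell\|_1$ purely in terms of the coherence $\mu$, using the familiar Neumann series estimate for $(A_S^*A_S)^{-1}$. Concretely, fix any $S \subset \{1,\dots,N\}$ with $|S| \le k$ and any $\ell \notin S$. I would write
\[
A_S^\dagger a_\ell \;=\; (A_S^*A_S)^{-1} A_S^* a_\ell ,
\]
and use the submultiplicativity of the $\|\cdot\|_{1\to 1}$ operator norm (maximum absolute column sum) to get
\[
\|A_S^\dagger a_\ell\|_1 \;\le\; \|(A_S^*A_S)^{-1}\|_{1\to 1} \cdot \|A_S^* a_\ell\|_1 .
\]

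Next I would bound each factor. The vector $A_S^* a_\ell$ has at most $k$ entries, each of the form $\langle a_j,a_\ell\rangle$ with $j \in S$ and $\ell \notin S$, hence of magnitude at most $\mu$. This yields $\|A_S^* a_\ell\|_1 \le k\mu$. For the inverse, I would decompose $A_S^*A_S = I + E$, where $E$ vanishes on the diagonal (columns are normalized, $\|a_j\|_2=1$) and satisfies $|E_{ij}|\le \mu$ off the diagonal. Each column of $E$ has at most $k-1$ nonzero entries, so $\|E\|_{1\to 1}\le (k-1)\mu$. The hypothesis $(2k-1)\mu<1$ implies in particular $(k-1)\mu<1$, so the Neumann series
\[
(I+E)^{-1} \;=\; \sum_{j=0}^\infty (-E)^j
\]
converges in the operator norm $\|\cdot\|_{1\to 1}$, giving
\[
\|(A_S^*A_S)^{-1}\|_{1\to 1} \;\le\; \frac{1}{1-(k-1)\mu}.
\]

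Combining the two estimates produces the classical Tropp-type bound
\[
\|A_S^\dagger a_\ell\|_1 \;\le\; \frac{k\mu}{1-(k-1)\mu}.
\]
Under the assumption $(2k-1)\mu<1$, rearranging gives $k\mu < 1-(k-1)\mu$, so this ratio is strictly less than $1$. Hence condition~\eqref{cond1} of Proposition~\ref{prop:cond1} is satisfied for every support set $S$ with $|S|\le k$ and every $\ell\notin S$, and Proposition~\ref{prop:cond1} yields recovery by \eqref{l1l2} of every jointly $k$-sparse $X$.

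The argument has no real obstacle: the only point requiring any care is the choice of the correct operator norm. It must be the $\ell_1 \to \ell_1$ norm so that both the initial submultiplicativity inequality and the Neumann series bound use the same norm, and so that $\|E\|_{1\to 1}$ can be estimated simply by counting nonzeros per column. Using $\|\cdot\|_2$ here would lose the factor structure that makes the coherence bound tight. Everything else is a direct computation from the definition of $\mu$ and the identity $A_S^\dagger = (A_S^*A_S)^{-1}A_S^*$.
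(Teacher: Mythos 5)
Your proof is correct, and it is the standard self-contained derivation: reduce to the exact recovery condition \eqref{cond1} of Proposition~\ref{prop:cond1}, bound $\|A_S^* a_\ell\|_1 \le k\mu$, and control $\|(A_S^*A_S)^{-1}\|_{1\to 1}$ by writing $A_S^*A_S = I+E$ with $\|E\|_{1\to 1}\le (k-1)\mu<1$ and summing the Neumann series; the resulting bound $k\mu/(1-(k-1)\mu)<1$ is exactly equivalent to $(2k-1)\mu<1$. Note that the paper does not actually prove the proposition inline: it obtains it as a special case of the block-sparsity coherence results of \cite{EB08} (by observing that the block coherence reduces to $\mu/d$ in this setting), and merely gestures at the route you take in the sentence preceding the statement. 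Your argument is the more elementary and transparent one --- it is the classical Tropp/Fuchs-style computation from \cite{tr04} --- and it has the side benefit of making explicit that $(k-1)\mu<1$ already guarantees invertibility of $A_S^*A_S$, which Proposition~\ref{prop:cond1} implicitly requires; the paper's citation route buys generality (the same machinery covers arbitrary block sizes and block coherence) at the cost of opacity here. Your remark about working throughout in the $\ell_1\to\ell_1$ operator norm, rather than the spectral norm, is exactly the right point of care: the spectral-norm version of the same chain of inequalities would only give the weaker $2$-norm bound of Proposition~\ref{prop:bound:mu}, which does not suffice for the worst-case condition \eqref{cond1}.
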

Under the same conditions as in Propositions~\ref{prop:cond1} and
\ref{prop:mu}, it is shown in \cite{tr04} that BP will recover a
single $k$-sparse vector. Therefore, if (\ref{cond1}) holds, then
instead of solving (\ref{l1l2}) we can use BP on each of the
columns of $Y$.

The coherence is lower bounded by \cite{hest03}
\begin{equation}
\mu \geq \sqrt{\frac{N-n}{n(N-1)}}.
\end{equation}
The lower bound
 behaves like $1/\sqrt{n}$ for large $N$, which
limits the Proposition~\ref{prop:mu} to maximal sparsities $k =
{\cal{O}}(\sqrt{n})$. To improve on this we can generalize
existing recovery results \cite{carota06-1,ca08} based on RIP to
the multichannel setup. The restricted isometry constant $\delta_k$ of a
matrix $A$ is defined to be the smallest constant $\delta_k$ such that
\begin{equation}\label{def:RIP}
(1-\delta_k)\|x\|_2^2 \leq \|Ax\|_2^2 \leq (1+\delta_k)\|x\|_2^2,
\end{equation}
for all $k$-sparse vectors $x$. The next proposition follows from
\cite{EM082}.
\begin{proposition}\label{thm_RIP} Assume $A \in \C^{n\times N}$ with $\delta_{2k} < \sqrt{2} -1$
Let $X \in \C^{N \times L}$, $Y = AX$, and let $\overline{X}$ be the minimizer of
\eqref{l1l2}. Then
\[
\|X-\overline{X}\|_{\mbox{F}} \leq C k^{-1/2}\|X -
\hat{X}^{(k)}\|_{2,1}
\]
where $C$ is a constant, $\|X\|_{\mbox{F}}=\sqrt{\tr(X^*X)}$ is
the Frobenius norm of $X$ 
and $\hat{X}^{(k)}$ denotes the best $k$-term approximation of
$X$, \ie $\supp \hat{X}^{(k)}$ consists of the indices
corresponding to the $k$ largest row norms $\|X^{\ell}\|_2$. In
particular, recovery is exact if $|\supp X| \leq k$.
\end{proposition}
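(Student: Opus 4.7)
The plan is to follow the Cand\`es-style proof of the $\sqrt{2}-1$ RIP threshold from the single-channel case, adapted to matrix signals by replacing the vector $\ell_2$ norm with the Frobenius norm and the $\ell_1$ norm with the mixed $\ell_{2,1}$ norm. Three simple observations make this work: (i) the Frobenius norm is the $\ell_2$ norm of the column stacking of the matrix, so the RIP of $A$ yields $(1-\delta_k)\|X\|_F^2 \le \|AX\|_F^2 \le (1+\delta_k)\|X\|_F^2$ whenever every column of $X$ is supported in a common set of size at most $k$; (ii) the analogous polarization estimate $|\langle AU, AV\rangle_F| \le \delta_{k+k'}\|U\|_F\|V\|_F$ holds for matrices $U,V$ with disjoint row-supports of sizes $k$ and $k'$, by applying the scalar polarization to each pair of columns and using Cauchy--Schwarz on $\sum_\ell \|u_\ell\|_2\|v_\ell\|_2$; and (iii) in the scalar proof the role of $|x_j|$ is played by the row norm $\|X^j\|_2$.

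Setting $H = \overline{X} - X$, feasibility gives $AH = 0$ and optimality gives $\|X+H\|_{2,1} \le \|X\|_{2,1}$. Let $T_0$ index the $k$ largest row norms of $X$. Splitting $\|X+H\|_{2,1}$ over $T_0$ and its complement via the triangle inequality produces the standard cone inequality
\[
\|H^{T_0^c}\|_{2,1} \le \|H^{T_0}\|_{2,1} + 2\|X^{T_0^c}\|_{2,1}.
\]
Partition $T_0^c$ into successive blocks $T_1, T_2, \ldots$ of size $k$ each, where $T_j$ collects the $k$ indices of largest row norms of $H$ outside $T_0 \cup \cdots \cup T_{j-1}$. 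Each row of $H$ indexed by $T_j$ has norm at most the average row norm over $T_{j-1}$, giving $\|H^{T_j}\|_F \le \sqrt{k}\max_{i \in T_j}\|H^i\|_2 \le k^{-1/2}\|H^{T_{j-1}}\|_{2,1}$, which telescopes to $\sum_{j\ge 2}\|H^{T_j}\|_F \le k^{-1/2}\|H^{T_0^c}\|_{2,1}$.

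Writing $T_{01} = T_0 \cup T_1$ and using $AH^{T_{01}} = -\sum_{j\ge 2} AH^{T_j}$, bound $\|AH^{T_{01}}\|_F^2$ from below by $(1-\delta_{2k})\|H^{T_{01}}\|_F^2$ via (i) and from above by $\sum_{j\ge 2}|\langle AH^{T_{01}}, AH^{T_j}\rangle_F|$. Applying (ii) to each pair $(T_0, T_j)$ and $(T_1, T_j)$ with combined index size at most $2k$ and then using $\|H^{T_0}\|_F + \|H^{T_1}\|_F \le \sqrt{2}\|H^{T_{01}}\|_F$ yields
\[
(1-\delta_{2k})\|H^{T_{01}}\|_F \le \sqrt{2}\,\delta_{2k}\sum_{j\ge 2}\|H^{T_j}\|_F \le \sqrt{2}\,\delta_{2k}\,k^{-1/2}\|H^{T_0^c}\|_{2,1}.
\]
Combining this with the cone inequality and $\|H^{T_0}\|_{2,1} \le \sqrt{k}\|H^{T_0}\|_F \le \sqrt{k}\|H^{T_{01}}\|_F$ and solving for $\|H^{T_{01}}\|_F$ produces $\|H^{T_{01}}\|_F \le C'\,k^{-1/2}\|X^{T_0^c}\|_{2,1}$, whose denominator $1-(\sqrt{2}+1)\delta_{2k}$ is positive precisely when $\delta_{2k} < \sqrt{2}-1$. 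Finally, $\|H\|_F \le \|H^{T_{01}}\|_F + \sum_{j\ge 2}\|H^{T_j}\|_F$ together with one further application of the cone inequality yields the stated bound, with exact recovery when $X^{T_0^c}=0$, i.e.\ when $|\supp X| \le k$.

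The main obstacle is the matrix polarization estimate in (ii): unlike the scalar case, $\langle AU, AV\rangle_F$ decomposes into $L$ separate column inner products, and one must invoke Cauchy--Schwarz at exactly the right place to avoid losing a factor of $\sqrt{L}$ in the bound. Once this estimate is established with the correct constant, the rest of the argument is a direct transcription of the single-channel Cand\`es proof, since the mixed $\ell_{2,1}$ norm interacts with the Frobenius norm in precisely the way the $\ell_1$ norm interacts with the $\ell_2$ norm.
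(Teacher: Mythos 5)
Your proof is correct. The paper itself offers no argument for this proposition --- it is quoted from \cite{EM082}, where the corresponding statement is proved for the more general block-sparse model (the MMV problem is the special case in which the blocks are the rows of $X$, and the block-RIP constant of the induced measurement operator coincides with $\delta_{2k}$ of $A$). What you have written is in substance that same Cand\`es-type argument, specialized to and carried out directly in the matrix setting rather than obtained by reduction. The one point that genuinely requires care --- your observation (ii), that the near-orthogonality estimate $|\langle AU,AV\rangle_F|\leq \delta_{k+k'}\|U\|_F\|V\|_F$ for disjointly row-supported $U,V$ must not lose a factor of $\sqrt{L}$ --- is handled correctly: columnwise polarization gives $\sum_\ell \delta_{k+k'}\|u_\ell\|_2\|v_\ell\|_2$, and Cauchy--Schwarz on the sequence $(\|u_\ell\|_2)_\ell$, $(\|v_\ell\|_2)_\ell$ turns this into $\delta_{k+k'}\|U\|_F\|V\|_F$ with no dimensional penalty. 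The remaining steps (the cone inequality for the $\ell_{2,1}$ norm, the block decomposition of $T_0^c$ with $\sum_{j\geq 2}\|H^{T_j}\|_F\leq k^{-1/2}\|H^{T_0^c}\|_{2,1}$, and the bound $\|H^{T_0}\|_F+\|H^{T_1}\|_F\leq\sqrt{2}\|H^{T_{01}}\|_F$) all go through verbatim, and the denominator $1-(1+\sqrt{2})\delta_{2k}$ is positive exactly under the stated hypothesis $\delta_{2k}<\sqrt{2}-1$. The self-contained derivation is arguably preferable to the citation, since it makes explicit that the mixed $\ell_{2,1}$/Frobenius pair interacts with the RIP exactly as the $\ell_1$/$\ell_2$ pair does in the single-channel case --- which is also why, as the paper remarks after the proposition, no gain from multiple channels can be extracted from this worst-case bound.
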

It is well known that Gaussian and Bernoulli random matrices $A
\in \R^{n \times N}$ satisfy $\delta_{2k} \leq \sqrt{2}-1$ with
high probability as long as \cite{badadewa06,cata06}
\begin{equation}\label{RIPn}
n \geq C k \log(N/k).
\end{equation}
For random partial Fourier matrices the respective condition is $n \geq c k \log^4(N)$
\cite{ra06,ruve06}.
Therefore, Proposition~\ref{thm_RIP} allows for a smaller number
of measurements. However, there is still no dependency on the
number of channels. Indeed, under the same RIP condition BP will
recover a single $k$-sparse vector and therefore, as before, BP
may as well be applied to each of the columns of $Y$ individually.

We conclude this overview by stressing that known equivalence
results do not improve on those for single channel sparse
recovery. In \cite{EM082,EB08} equivalence results are derived for
a mixed $\ell_{2,1}$ program when different measurement matrices
$A_i$ are used on each channel. In this case, even worst-case
analysis shows improvement over $L=1$. However, when all
measurement matrices are equal, the recovery conditions do not
show any advantage with multiple signals.

\section{A Recovery Condition}
\label{sec:rec}


 Before turning to analyze the average-case behavior of
\eqref{l1l2}, we first develop a new condition on $A$ that allows
for perfect recovery. This formulation will be useful in deriving
the average-case results.

In the following theorem we give a sufficient condition on the
minimizers of \eqref{l1l2}. This theorem generalizes a result of
\cite{tr05-1,fu04} for the $L=1$ case. To this end we denote by
$\sgn(X) \in \C^{N \times L}$ the matrix with entries
\begin{equation}
\sgn(X)_{\ell j} = \left\{\begin{array}{cc}
\frac{X_{\ell j}}{\| X^\ell \|_2}, & \| X^\ell \|_2 \neq 0; \\
0, & \| X^\ell \|_2=0.
\end{array}\right.
\end{equation}
In this definition, each element of $X$ is normalized by the norm
of the corresponding row. When $L=1$, $\sgn(X)$ reduces to the
sign of the elements of the vector $x$.
\begin{Theorem}
\label{thm:sgn} Let $X \in \C^{N \times L}$ with $\supp X = S$ and
assume $A_S$ to be non-singular. If there exists a
matrix $H \in \C^{n \times L}$ such that
\begin{equation}
\label{eq:c1}
 A_S^* H = \sgn(X^S),
\end{equation}
and
\begin{equation}
\label{eq:c2} \|H^* a_\ell \|_2 < 1 \quad \mbox{ for all } \ell
\notin S
\end{equation}
then $X$ is the unique solution of \eqref{l1l2}.
\end{Theorem}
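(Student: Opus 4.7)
The plan is to adapt the standard dual-certificate argument for the single-channel basis pursuit to the mixed $\ell_{2,1}$ setting. The matrix $H$ plays the role of a dual vector: condition \eqref{eq:c1} pins down the values of $H^* a_\ell$ for $\ell \in S$ to be the subgradient of $\|\cdot\|_{2,1}$ at $X$, and condition \eqref{eq:c2} says these values are strictly inside the dual unit ball off the support. Uniqueness should then follow from strict feasibility of the certificate.

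Concretely, I would take any competitor $Z \in \C^{N \times L}$ with $AZ = Y$ and write $W = Z - X$, which lies in the null space of $A$, i.e., $A_S W^S + A_{S^c} W^{S^c} = 0$. The key analytic ingredient is the subgradient inequality for the Euclidean norm: for any nonzero row $X^j$, $\|X^j + W^j\|_2 \geq \|X^j\|_2 + \Re\langle X^j/\|X^j\|_2, W^j\rangle$, while for $j \notin S$ the row is simply $\|W^j\|_2$. Summing over $j$, I obtain
\[
\|Z\|_{2,1} \geq \|X\|_{2,1} + \Re\langle \sgn(X^S), W^S \rangle_F + \sum_{\ell \notin S} \|W^\ell\|_2,
\]
where $\langle \cdot , \cdot \rangle_F$ denotes the Frobenius inner product.

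Next I would eliminate the middle term using the certificate. Substituting $\sgn(X^S) = A_S^* H$ gives $\langle \sgn(X^S), W^S\rangle_F = \langle H, A_S W^S\rangle_F = -\langle H, A_{S^c} W^{S^c}\rangle_F = -\sum_{\ell \notin S} \langle H^* a_\ell, W^\ell\rangle$. Cauchy--Schwarz together with \eqref{eq:c2} yields $|\Re\langle H^* a_\ell, W^\ell\rangle| \leq \|H^* a_\ell\|_2 \, \|W^\ell\|_2$, so
\[
\|Z\|_{2,1} \geq \|X\|_{2,1} + \sum_{\ell \notin S} \bl 1 - \|H^* a_\ell\|_2\br \|W^\ell\|_2.
\]
Since each coefficient is strictly positive, $\|Z\|_{2,1} \geq \|X\|_{2,1}$, with equality forcing $W^\ell = 0$ for every $\ell \notin S$. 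Then $A_S W^S = -A_{S^c} W^{S^c} = 0$, and non-singularity of $A_S$ forces $W^S = 0$, so $Z = X$.

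The routine calculations are entirely standard; the only delicate point is making the complex-scalar bookkeeping consistent, in particular using the real part in the subgradient inequality and correctly interpreting $\sgn(X)$ as the row-normalized matrix so that $\sgn(X^S) = A_S^* H$ is the appropriate analogue of the sign vector appearing in the Fuchs/Tropp certificate for $L=1$. No other obstacle is expected.
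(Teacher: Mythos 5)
Your proof is correct, but it takes a genuinely different route from the paper's. The paper follows Tropp's trace argument: it writes $\|X\|_{2,1}=\tr\bl\sgn(X^S)(X^S)^*\br$, substitutes $A_S^*H=\sgn(X^S)$ and $A_SX^S=AX'$ to get $\|X\|_{2,1}=\tr\bl X'^{S'}H^*A_{S'}\br$, and then invokes a H\"older-type trace lemma ($|\tr(BA)|\leq\|B\|_{2,1}\max_\ell\|A_\ell\|_2$) to bound this by $\|X'\|_{2,1}$; strictness requires a separate case analysis showing that $S'$ must contain an index outside $S$ and examining when the lemma's inequality is strict. You instead run the Fuchs-style null-space argument: decompose the competitor as $X+W$ with $AW=0$, apply the subgradient inequality of the row norms on $S$, and use the certificate to convert the on-support correction term into an off-support sum, leaving the explicit slack $\sum_{\ell\notin S}\bl 1-\|H^*a_\ell\|_2\br\|W^\ell\|_2\geq 0$. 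This buys you a cleaner uniqueness step — equality immediately forces $W^{S^c}=0$ and then injectivity of $A_S$ forces $W^S=0$ — and avoids the paper's somewhat delicate discussion of whether the values $\|H^*A_\ell\|_2$ are all equal on $S'$. The paper's route has the minor advantage of producing the trace lemma as a reusable tool, but your argument is the more standard dual-certificate template and all steps (the real-part subgradient inequality, the adjoint manipulation $\langle A_S^*H,W^S\rangle_F=\langle H,A_SW^S\rangle_F$, and the row-wise Cauchy--Schwarz bound) check out.
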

Before proving the theorem we note that the two conditions on $H$
easily imply that
\begin{equation}
\label{eq:normh} \|H^*a_{\ell}\|_2  \leq 1,\quad \mbox { for all }
\ell.
\end{equation}

\begin{proof}
The proof follows the ideas of \cite{tr05-1}, with appropriate
modifications to account for the mixed $\ell_{2,1}$ norm that
replaces the $\ell_1$ norm.

Let $Y=AX$, and assume there exists a matrix $H$ such that $X,H$
satisfy (\ref{eq:c1}) and (\ref{eq:c2}). Let $X'$ be an
alternative matrix satisfying $Y=AX'$. Our goal is to show that
$\|X\|_{2,1}<\|X'\|_{2,1}$. To this end, we note that
\begin{equation}
\label{eq:pr1t}
\|X\|_{2,1}=\|X^S\|_{2,1}=\tr\bl\sgn(X^S)(X^S)^*\br,
\end{equation}
where $\tr$ denotes the trace. Substituting
$A_S^* H = \sgn(X^S)$ into (\ref{eq:pr1t}), and using the cyclicity of the trace
we have
\begin{align}
\label{eq:pr1} \|X\|_{2,1} &= \tr\bl H^*A_S X^S\br=\tr\bl H^*A X'\br\\
&= \tr\bl X'^{S'}H^*A_{S'} \br,\notag
\end{align}
where we used the fact that $A_SX^S=Y=AX'$ and $S'$ denotes the
support of $X'$.
We next rely on the following lemma.
\begin{lemma}
\label{lemma:tr} Let $A, B$ be matrices such that $AB$ is defined.
Then $|\tr(BA)| \leq \|B\|_{2,1} \max_{\ell}
\|A_\ell\|_2$, with strict inequality if $\|A_\ell\|_2 < \max_{\ell} \|A_{\ell}\|_2$
for some value of $\ell$ for which $\|B^\ell\|_2 \neq 0$.
\end{lemma}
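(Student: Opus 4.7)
The plan is to expand $\tr(BA)$ as a sum of row--column inner products and then apply Cauchy--Schwarz term by term. Writing $B^i$ for the $i$th row of $B$ (as a row vector) and $A_i$ for the $i$th column of $A$, one has
\[
\tr(BA) = \sum_i (BA)_{ii} = \sum_i \sum_j B_{ij} A_{ji} = \sum_i B^i A_i,
\]
so the task reduces to controlling $\sum_i B^i A_i$.

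Next I would apply the triangle inequality followed by Cauchy--Schwarz in each summand:
\[
|\tr(BA)| \leq \sum_i |B^i A_i| \leq \sum_i \|B^i\|_2\, \|A_i\|_2.
\]
Bounding each $\|A_i\|_2$ by $\max_\ell \|A_\ell\|_2$ and factoring this quantity out immediately yields
\[
|\tr(BA)| \leq \left(\sum_i \|B^i\|_2\right) \max_\ell \|A_\ell\|_2 = \|B\|_{2,1} \max_\ell \|A_\ell\|_2,
\]
which is the desired inequality.

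For the strict inequality, suppose there exists an index $\ell_0$ with $\|A_{\ell_0}\|_2 < \max_\ell \|A_\ell\|_2$ and $\|B^{\ell_0}\|_2 \neq 0$. Then the corresponding term in the chain above satisfies $\|B^{\ell_0}\|_2 \|A_{\ell_0}\|_2 < \|B^{\ell_0}\|_2 \max_\ell \|A_\ell\|_2$ with strict inequality, while every other term is bounded by $\|B^i\|_2 \max_\ell \|A_\ell\|_2$; summing produces a strict inequality overall. There is no real obstacle here beyond bookkeeping: the argument is a direct application of Cauchy--Schwarz, and the only care required is to ensure that the strict inequality at one index is preserved after summation, which is automatic because all other terms still satisfy the weak bound.
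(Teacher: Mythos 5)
Your proof is correct and follows essentially the same route as the paper's: expand $\tr(BA)=\sum_\ell B^\ell A_\ell$, apply the triangle inequality and Cauchy--Schwarz termwise, then bound each $\|A_\ell\|_2$ by the maximum. Your treatment of the strict-inequality case is in fact slightly more explicit than the paper's one-line remark, but the argument is identical in substance.
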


\begin{proof}
The proof follows from noting that
\begin{align}
&|\tr(BA)|
\leq  \sum_{\ell} |B^\ell
A_\ell| \leq \sum_{\ell} \|B^\ell\|_2 \|A_\ell\|_2 \notag\\
 &\hspace*{0.2in} \leq \max_{\ell}\|A_\ell\|_2 \sum_{\ell}
\|B^\ell\|_2 = \max_{\ell}\|A_\ell\|_2 \|B^\ell\|_{2,1},\notag
\end{align}
where the second inequality is a result of applying
Cauchy-Schwartz. Under the condition of the lemma, we have strict
inequality in the last inequality.
\end{proof}
Applying Lemma~\ref{lemma:tr} to (\ref{eq:pr1}), leads to
\begin{align}
\label{eq:pr2} \|X\|_{2,1} &\leq \|X'^{S'}\|_{2,1}\max_{\ell \in
S'}\|H^*A_{\ell}\|_2 \leq \|X'^{S'}\|_{2,1} \\
&=\|X'\|_{2,1},\notag
\end{align}
where the last inequality follows from (\ref{eq:normh}). We have
strict inequality in the first inequality of (\ref{eq:pr2}) as
long as the values $\|H^*A_{\ell}\|_2$ for $\ell \in S'$ are not
all equal since $\|{X'}^\ell\|_2 \neq 0$ for all $\ell \in S'$ be definition of the support.

Suppose to the contrary that $\|H^*A_{\ell}\|_2=a$ for all $\ell
\in S'$. Clearly, $S'$ must contain at least one index $\ell$ that
is not contained in $S$; otherwise $S'\subset S$, which would contradict the hypothesis
that $A_S$ is non-singular, $A_{S'} X' = A_S X$ and $X \neq X'$.
By our assumption $\|H^*a_{\ell}\|_2<1$,
which then implies that $a<1$ or $\|H^*A_{\ell}\|_2<1, \ell \in
S'$. The inequalities in (\ref{eq:pr2}) then become
\begin{equation}
\label{eq:pr3} \|X\|_{2,1} \leq \|X'^{S'}\|_{2,1}\max_{\ell \in
S'}\|H^*A_{\ell}\|_2 <\|X'^{S'}\|_{2,1}=\|X'\|_{2,1}.
\end{equation}
Thus, we have shown that $\|X'\|_{2,1}>\|X\|_{2,1}$ for any $X'$
such that $Y=AX'$, and therefore (\ref{l1l2}) recovers the true
sparse matrix $X$.
\end{proof}

Choosing $H = (A_S^\dagger)^* \sgn(X_S)$ in Theorem~\ref{thm:sgn}
results in the following corollary.
 \begin{corollary}\label{cor_reconstruct} Let $X \in \C^{N \times L}$ with $\supp X = S$ and assume $A_S$ to be non-singular.
 If
 \begin{equation}
 \label{eq:conds}
 \|\sgn(X^S)^* A_S^\dagger a_\ell \|_2 < 1 \quad \mbox{ for all } \ell \notin
 S,
\end{equation}
then $X$ is the unique minimizer of \eqref{l1l2}.
 \end{corollary}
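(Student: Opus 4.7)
The plan is to deduce this corollary as a direct application of Theorem~\ref{thm:sgn} with the explicit choice $H = (A_S^\dagger)^* \sgn(X^S) \in \C^{n \times L}$, so the main task is to verify that both hypotheses of the theorem hold for this $H$ under the assumption \eqref{eq:conds}.

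First I would check condition \eqref{eq:c1}. Using $A_S^\dagger = (A_S^* A_S)^{-1} A_S^*$, which is well defined because $A_S$ is assumed non-singular (full column rank), one has $(A_S^\dagger)^* = A_S (A_S^* A_S)^{-1}$, and hence
\begin{equation*}
A_S^* H = A_S^* A_S (A_S^* A_S)^{-1} \sgn(X^S) = \sgn(X^S),
\end{equation*}
so \eqref{eq:c1} is satisfied automatically. This step is essentially a one-line computation; I do not expect any obstacle here.

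Next I would check condition \eqref{eq:c2}. Taking adjoints in the definition of $H$ gives $H^* = \sgn(X^S)^* A_S^\dagger$, so for every $\ell \notin S$,
\begin{equation*}
\|H^* a_\ell\|_2 = \|\sgn(X^S)^* A_S^\dagger a_\ell\|_2 < 1
\end{equation*}
by the hypothesis \eqref{eq:conds}. Again this is essentially immediate.

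Having verified both hypotheses, Theorem~\ref{thm:sgn} applies and yields that $X$ is the unique minimizer of \eqref{l1l2}, completing the proof. Since the argument is just a substitution into an already-proved theorem, there is no real obstacle; the only subtlety worth flagging is making clear that ``$A_S$ non-singular'' is being used in the sense of ``$A_S$ has full column rank'' (so that $A_S^\dagger A_S = I$), which is already how the hypothesis is used elsewhere in the paper.
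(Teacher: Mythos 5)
Your proposal is correct and is exactly the paper's argument: the paper also obtains the corollary by substituting $H = (A_S^\dagger)^* \sgn(X^S)$ into Theorem~\ref{thm:sgn}, with the two verifications you spell out being immediate. No issues.
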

 This corollary will be instrumental in proving the average-case
 performance of (\ref{l1l2}). It can easily be seen that
 Corollary~\ref{cor_reconstruct} implies
 Proposition~\ref{prop:cond1}. This follows from the triangle inequality,
\begin{align}
&\left\| \sgn(X^S)^* A_S^\dagger a_\ell \right\|_2
=  \left\|
\sum_{j \in S} (A_S^\dagger a_\ell)_j \sgn(X^j)^* \right\|_2 \notag\\
&\hspace*{0.2in} \leq \sum_{j \in S} |(A_S^\dagger a_\ell)_j|~\|
\sgn(X^j) \|_2 = \|A_S^\dagger a_\ell\|_1,\notag
\end{align}
where we used the fact that $\| \sgn(X^j) \|_2=1$.

\section{Average Case Analysis}
\label{sec:avg}

Intuitively, we would expect multichannel sparse recovery to
perform better than single channel recovery. However, in the worst
case setting this is not true as already suggested by the results
of Section~\ref{sec:l1l2}. The reason is very simple. If each
channel carries the same signal, $X_{\ell} = x$ for
$\ell=1,\hdots,L$, then also the components of $Y = AX$ are all
the same and we do not have more information on the support of $X$
than provided by a single component $Y_{\ell}$. The following
proposition establishes formally that if BP fails for a given
measurement matrix $A$, then multichannel optimization
(\ref{l1l2}) will fail as well so that in the worst-case, adding
channels will not improve performance.
\begin{proposition}\label{prop:worst}
 Suppose there exists a $k$-sparse vector $x \in \C^N$ that $\ell_1$-minimization
is not able to recover from $y = Ax$. Then $\ell_{2,1}$-minimization fails to recover
$X = (x|x|\cdots|x) \in \C^{N\times L}$ from $Y = AX$.
\end{proposition}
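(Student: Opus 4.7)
The plan is to lift any failure of single channel $\ell_1$-minimization to the multichannel setting by the constant-row construction $X' = (x'|x'|\cdots|x')$. The hypothesis that BP fails to recover $x$ from $y = Ax$ means that there exists some vector $x' \in \C^N$ with $x' \neq x$, $Ax' = y$, and $\|x'\|_1 \leq \|x\|_1$ (otherwise $x$ would be the unique $\ell_1$ minimizer under $Ax'=y$). I would take exactly this $x'$ as the seed for the competing multichannel solution.

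Next I would verify feasibility and compare mixed norms. Feasibility is immediate: each column of $AX'$ equals $Ax' = y$, which is the corresponding column of $Y$, so $AX' = Y$. For the norm comparison, I would use the simple observation that for any matrix $Z = (z|z|\cdots|z)$ with identical columns, the $j$-th row has entries $(z_j,\ldots,z_j)$ with Euclidean norm $\sqrt{L}\,|z_j|$, so
\begin{equation}
\|Z\|_{2,1} = \sum_{j=1}^N \|Z^j\|_2 = \sqrt{L}\,\|z\|_1.
\end{equation}
Applying this identity to both $X$ and $X'$ gives
\begin{equation}
\|X'\|_{2,1} = \sqrt{L}\,\|x'\|_1 \leq \sqrt{L}\,\|x\|_1 = \|X\|_{2,1}.
\end{equation}

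Since $x' \neq x$ we have $X' \neq X$, yet $X'$ is feasible and achieves an $\ell_{2,1}$ value no larger than that of $X$. Consequently $X$ is not the unique minimizer of \eqref{l1l2}, which is exactly what it means for mixed $\ell_{2,1}$-minimization to fail to recover $X$ from $Y = AX$. There is essentially no technical obstacle here; the only point requiring a bit of care is articulating precisely what "BP fails" means, namely the existence of a distinct feasible vector with no larger $\ell_1$-norm, and then observing that this property is preserved column-wise by the constant-column lift.
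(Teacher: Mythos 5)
Your proposal is correct and follows essentially the same route as the paper's proof: take the competing vector $x'$ with $Ax'=y$ and $\|x'\|_1\leq\|x\|_1$ guaranteed by the failure of single-channel BP, lift it to the constant-column matrix $X'$, and use the identity $\|X'\|_{2,1}=\sqrt{L}\,\|x'\|_1$ to conclude that $X$ is not the unique minimizer. Your write-up is in fact slightly more explicit than the paper's about why the failure of BP yields such an $x'$ and why $X'\neq X$, but the argument is identical in substance.
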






\begin{proof}
If $\ell_1$-recovery fails on some
$k$-sparse $x$ then necessarily $\|x'\|_{1} \leq \|x\|_{1}$ for
some $x'$ satisfying $A x' = Ax$.
Clearly $X=(x|x|\cdots|x)$ is (jointly) $k$-sparse and $AX = A X'$ for
$X'=(x'|xÔ|\cdots|x')$. Furthermore,
\[
\|X'\|_{2,1} = \sqrt{L} \|x'\|_1 \leq \sqrt{L} \|x\|_1 = \|X\|_{2,1}
\]
and therefore $X$ is not the unique minimizer of the
$\ell_{2,1}$-minimization problem.
\end{proof}

Realizing that \eqref{l1l2} is not more powerful than usual BP in
the worst case, we seek an average-case analysis. This means that
we impose a probability model on the $k$-sparse $X$. In
particular, as in \cite{grrascva07}, we will assume that on the
support $S$ of size $k$ the coefficients of $X$ are chosen at random.
We then show that under a suitable probability model
on the non-zero elements of $X$,
the condition given by Corollary~\ref{cor_reconstruct} is
satisfied with high probability, which depends on $L$.

We follow the probability model used in \cite{grrascva07}: let $S$
be the joint support of cardinality $k$. On $S$ the coefficients
are given by
\begin{equation}\label{prob:model}
X^S = \Sigma \Phi
\end{equation}
where $\Sigma = \operatorname{diag}(\sigma_j, j\in S) \in
\R^{k\times k}$ is an arbitrary diagonal matrix with positive
diagonal elements $\sigma_j$. The matrix $\Phi$ will be chosen
at random according to one of the following models.
\begin{itemize}
\item {\bf Real Gaussian:} each entry of $\Phi \in \R^{k \times L}$ is chosen independently from a standard normal distribution.
\item {\bf Real spherical}: the rows of $\Phi \in \R^{k \times L}$ are chosen independently and uniformly at random from the real sphere $S^{L-1}$.
\item {\bf Complex Gaussian}: the real and imaginary parts of each entry of
{$\Phi \in \C^{k \times L}$} are chosen independently according to a standard normal distribution.
\item {\bf Complex spherical}: the rows of $\Phi \in \C^{k\times L}$ are chosen
independently and uniformly at random from the complex
sphere $S_\C^{L-1}$. 
\end{itemize}
Note that taking $\Sigma$ to be
the identity matrix results in a standard Gaussian random matrix $X^S$,
while taking arbitrary non-zero $\sigma_j$'s on the diagonal of
$\Sigma$ allows for different variances. The matrix $\Sigma$ may
be deterministic or random. In particular, choosing $\Sigma$ to be
the matrix with diagonal elements given by the inverse
$\ell_2$-norm of the rows of $\Phi$ in the real (complex) Gaussian model,
leads to a matrix $X^S$ with a real (complex) spherical distribution.

In Theorems~\ref{thm_average2} and \ref{thm_average} below we
develop conditions under which (\ref{l1l2}) recovers $X$ from
$Y=AX$ with probability that decays exponentially with $L$. The
condition in both theorems is given in terms of an upper bound on
$\|A_S^\dagger a_\ell\|_2$ for $\ell$ not in $S$. This is in
contrast to the worst-case result of Proposition~\ref{prop:cond1}
that is given in terms of $\|A_S^\dagger a_\ell\|_1$ and therefore
stronger. 
The
essential idea in both proofs is to show that if the bound on
$\|A_S^\dagger a_\ell\|_2$ is satisfied, then the sufficient
condition of Corollary~\ref{cor_reconstruct} holds with high
probability.

Before stating the first theorem, we derive the following result
on the norm of sums of independent random vectors, uniformly
distributed on a sphere.
\begin{Theorem}\label{thm:Bernstein} Let $a \in \C^k$ and let $Z_j$, $j=1,\hdots,k$, be a sequence
of independent random vectors which are uniformly distributed on the real sphere
$S^{L-1}$. 
Then for any $u > 1$
\begin{align}
&\P\bl\left\| \sum_{j=1}^k a_j Z_j\right\|_2 \geq u\|a\|_2\br \notag\\
&\hspace*{0.2in} \leq \exp\left(- \frac{L}{2}(u^2 -
\log(u^2)-1)\right).\notag
\end{align}
\end{Theorem}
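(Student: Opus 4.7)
The plan is a Chernoff-style argument driven by a moment generating function bound.  Writing $W = \sum_{j=1}^k a_j Z_j$, Markov's inequality yields
\[
\P\bl \|W\|_2 \geq u\|a\|_2 \br \leq e^{-\lambda u^2\|a\|_2^2}\, \E e^{\lambda \|W\|_2^2}
\]
for any $\lambda > 0$.  The goal is then to establish the MGF bound $\E e^{\lambda\|W\|_2^2} \leq (1 - 2\lambda\|a\|_2^2/L)^{-L/2}$ for $\lambda \in (0, L/(2\|a\|_2^2))$, which is precisely the MGF of the scaled chi-squared variable $(\|a\|_2^2/L)\cdot \chi_L^2$.  Optimizing the Chernoff bound by the choice $\lambda = L(1 - 1/u^2)/(2\|a\|_2^2)$ then delivers exactly the stated tail $\exp(-\frac{L}{2}(u^2 - \log(u^2) - 1))$.

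To obtain the MGF bound, I would linearize the squared norm via the Gaussian identity $e^{\|v\|_2^2/2} = \E_\xi \exp(\xi_R^T v_R + \xi_I^T v_I)$, valid for $v = v_R + i v_I \in \C^L$ with $\xi_R, \xi_I \in \R^L$ independent standard Gaussian vectors.  Applied to $v = \sqrt{2\lambda}\,W$, Fubini turns the inner expectation over the $Z_j$ into the product $\prod_{j=1}^k \E_{Z_j} \exp(\sqrt{2\lambda}\,Z_j^T w_j)$, where $w_j = \Re(a_j)\xi_R + \Im(a_j)\xi_I \in \R^L$.  The key ingredient at this step is the sharp sub-Gaussian estimate $\E \exp(t\langle u, Z\rangle) \leq \exp(t^2\|u\|_2^2/(2L))$ for $Z$ uniform on $S^{L-1}$ and arbitrary $u \in \R^L$, which by rotational invariance reduces to $\E e^{tZ_1} \leq e^{t^2/(2L)}$, and then follows from a coefficient-by-coefficient comparison of the Taylor series using $\E Z_1^{2m} = \prod_{i=0}^{m-1}(2i+1)/(L+2i)$ together with the elementary estimate $\prod_{i=0}^{m-1}(L+2i) \geq L^m$.

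Applying the sub-Gaussian estimate factor by factor gives $\E_Z e^{\lambda\|W\|_2^2} \leq \E_\xi \exp(\lambda\, \eta^T(\tilde M \otimes I_L)\eta / L)$, where $\eta \in \R^{2L}$ is obtained by stacking $\xi_R, \xi_I$ and $\tilde M$ is the $2\times 2$ positive semidefinite matrix with diagonal $\|\Re(a)\|_2^2, \|\Im(a)\|_2^2$ and off-diagonal $\langle \Re(a), \Im(a)\rangle$, so that $\tr(\tilde M) = \|a\|_2^2$.  The remaining expectation is a Gaussian quadratic-form MGF; by the Kronecker structure it equals $\det(I_2 - (2\lambda/L)\tilde M)^{-L/2}$.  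Writing this as $((1 - 2\lambda\mu_1/L)(1 - 2\lambda\mu_2/L))^{-L/2}$ with $\mu_1,\mu_2 \geq 0$ the eigenvalues of $\tilde M$, the two-variable inequality $(1-x_1)(1-x_2) \geq 1 - x_1 - x_2$ for $x_1, x_2 \geq 0$ combined with $\mu_1 + \mu_2 = \|a\|_2^2$ yields the desired upper bound $(1 - 2\lambda\|a\|_2^2/L)^{-L/2}$.

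The main obstacle I anticipate is the sharp sub-Gaussian estimate for spherical marginals with variance proxy $1/L$: a naive Hoeffding-type bound applied to the bounded variable $Z_1 \in [-1,1]$ yields only $\E e^{tZ_1} \leq e^{t^2/2}$, which would destroy all dependence on $L$ in the final bound and produce at best a constant-in-$L$ tail.  One really needs either the moment-by-moment computation sketched above or an equivalent log-Sobolev argument exploiting the $(L-1)$-curvature of the sphere.  Beyond that, the remainder of the proof is a clean sequence of Gaussian duality, Kronecker algebra, a $2\times 2$ determinant inequality, and a one-variable optimization.
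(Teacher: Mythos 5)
Your proof is correct, and it reaches the same intermediate target as the paper -- the moment generating function bound $\E e^{\lambda\|W\|_2^2} \leq (1-2\lambda\|a\|_2^2/L)^{-L/2}$, i.e.\ domination by a scaled $\chi^2_L$ variable, followed by the same optimal Chernoff choice $\lambda = L(1-u^{-2})/(2\|a\|_2^2)$ -- but you get there by a genuinely different route. The paper imports the vector-valued Khintchine inequality of K\"onig and Kwapie\'n, $\E\|\sum_j a_j Z_j\|_2^{2i} \leq (2/L)^i\,\Gamma(L/2+i)/\Gamma(L/2)\,\|a\|_2^{2i}$, and sums the Taylor series of the exponential term by term, recognizing $\sum_i (L/2)_i\lambda^i/i! = (1-\lambda)^{-L/2}$. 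You instead derive the MGF bound from scratch: Gaussian linearization of the squared norm, the sharp spherical marginal estimate $\E e^{tZ_1}\leq e^{t^2/(2L)}$ (your moment formula $\E Z_1^{2m}=\prod_{i=0}^{m-1}(2i+1)/(L+2i)$ and the comparison $\prod(L+2i)\geq L^m$ are exactly right, and you correctly flag that a generic Hoeffding bound on $[-1,1]$ would lose the crucial $1/L$), then a Gaussian quadratic-form determinant and the inequality $(1-x_1)(1-x_2)\geq 1-x_1-x_2$ applied to the eigenvalues of your $2\times 2$ Gram matrix $\tilde M$ with $\tr\tilde M=\|a\|_2^2$. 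What the paper's approach buys is brevity, at the cost of an external reference; what yours buys is a self-contained argument, and a cleaner treatment of complex coefficients $a\in\C^k$ (the paper dispatches the complex case with a one-line remark about splitting into real and imaginary parts, whereas your $\tilde M$ construction handles it explicitly and without loss). Your validity range $\lambda<L/(2\|a\|_2^2)$ does cover the optimizing value since $u>1$, so there is no gap there.
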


\begin{proof}
See Appendix~\ref{sec:prob}.
\end{proof}

 Theorem~\ref{thm:Bernstein}
generalizes the Bernstein inequality for Steinhaus sequences in
\cite[Theorem 13]{tr06-2} to higher dimensions. We may extend the
estimate easily to random vectors
uniformly distributed on complex unit spheres.

\begin{corollary}\label{cor:Bernstein} Let $a \in \C^k$ and let $Z_j$, $j=1,\hdots,k$, be a sequence
of independent random vectors which are uniformly distributed on the complex
sphere $S^{L-1}_\C$. 
Then for any $u > 1$
\begin{align}
&\P\bl\left\| \sum_{j=1}^k a_j Z_j\right\|_2 \geq u\|a\|_2\br \notag\\
&\hspace*{0.2in} \leq \exp\left(- L(u^2 -
\log(u^2)-1)\right).\notag
\end{align}
\end{corollary}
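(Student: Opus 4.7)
The plan is to reduce the complex-spherical case to the real-spherical statement of Theorem~\ref{thm:Bernstein} in two steps, effectively by identifying $S^{L-1}_\C$ with the real sphere $S^{2L-1}$ of twice the dimension.

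First I would eliminate the phases of the complex coefficients $a_j$. Since the uniform distribution on $S^{L-1}_\C$ is invariant under multiplication by any unit complex number $e^{i\theta}$, and since the $Z_j$'s are mutually independent, the joint law of $(e^{i\theta_1}Z_1,\ldots,e^{i\theta_k}Z_k)$ coincides with that of $(Z_1,\ldots,Z_k)$ for every deterministic choice of phases $\theta_j$. Taking $\theta_j = -\arg(a_j)$ yields
$$\sum_{j=1}^k a_j Z_j \;\stackrel{d}{=}\; \sum_{j=1}^k |a_j|\, Z_j,$$
and because $\|a\|_2 = \bigl\||a|\bigr\|_2$, it is enough to prove the tail bound under the additional assumption that $a \in \R^k$ with $a_j \geq 0$.

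Second, I would identify $\C^L$ with $\R^{2L}$ through the real-linear isometry $z \mapsto \tilde z = (\Re z, \Im z)$. Under this identification $S^{L-1}_\C$ is mapped isometrically onto $S^{2L-1} \subset \R^{2L}$, so the uniform distribution on $S^{L-1}_\C$ corresponds exactly to the uniform distribution on $S^{2L-1}$. For a \emph{real} scalar $a_j$, multiplication by $a_j$ commutes with the identification, i.e.\ $\widetilde{a_j Z_j} = a_j \tilde Z_j$, which gives
$$\left\|\sum_{j=1}^k a_j Z_j\right\|_2 \;=\; \left\|\sum_{j=1}^k a_j \tilde Z_j\right\|_2,$$
with the right-hand side interpreted in $\R^{2L}$. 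The Euclidean norm of the coefficient vector is unaffected by this identification since $a$ is real.

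Finally, I would apply Theorem~\ref{thm:Bernstein} to the independent random vectors $\tilde Z_1,\ldots,\tilde Z_k$, which are uniformly distributed on $S^{2L-1}$, with the real ambient dimension $2L$ in place of $L$ and with the nonnegative real coefficient vector $a$. This produces
$$\P\bl\left\|\sum_{j=1}^k a_j \tilde Z_j\right\|_2 \geq u\|a\|_2\br \leq \exp\bl-\tfrac{2L}{2}(u^2-\log(u^2)-1)\br,$$
which, combined with the two reductions above, is exactly the claimed inequality. There is no real obstacle to overcome; the only points that deserve a careful sentence are the phase-invariance reduction, which rests on independence of the $Z_j$'s together with rotational invariance of each marginal, and the verification that identifying $\C^L$ with $\R^{2L}$ preserves both the uniform measure on the sphere and the Euclidean norm.
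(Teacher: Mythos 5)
Your proposal is correct and follows essentially the same route as the paper: reduce to nonnegative real coefficients using the phase invariance of the uniform distribution on $S^{L-1}_\C$, identify $\C^L$ isometrically with $\R^{2L}$ so that the $Z_j$ become uniform on $S^{2L-1}$, and apply Theorem~\ref{thm:Bernstein} with $L$ replaced by $2L$. Your write-up is somewhat more explicit than the paper's about why the phase reduction and the sphere identification are legitimate, but the argument is the same.
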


\begin{proof} First observe that $a_j Z_j$ has the same distribution
as $|a_j| Z_j$. We may therefore assume without loss of generality
that $a_j \in \R$. Next, a random vector $Z \in S_\C^{L-1}$ is
uniformly distributed on $S_\C^{L-1}$ if and only if $(\Re (Z)^T,
\Im(Z)^T )^T$ is uniformly distributed on the real sphere
$S^{2L-1}$.
Applying Theorem~\ref{thm:Bernstein} with $L$ replaced by $2L$
yields the statement.
\end{proof}

With this tool at hand we can now easily prove the following
average-case recovery theorem.

\begin{Theorem}\label{thm_average2}
Let $S \subset \{1,\hdots,N\}$ be a set of cardinality $k$
and suppose
\begin{equation}\label{cond:boundnorm}
\|A_S^\dagger a_\ell\|_2 \leq \alpha < 1 \quad \mbox{for all }
\ell \notin S.
\end{equation}
Let $X \in \R^{N\times L}$ with $\supp X \subset \{1,\hdots, N\}$ such
that the coefficients on $S$ are given by \eqref{prob:model} with
some diagonal matrix $\Sigma \in \R^{k \times k}$ and $\Phi \in \R^{k \times L}$
chosen from the real Gaussian or spherical probability.
Then with probability at least
\begin{align}\label{prob:bound}
1- N \exp\left(-\frac{L}{2}(\alpha^{-2} - \log(\alpha^{-2}) - 1)\right)
\end{align}
\eqref{l1l2} recovers $X$ from $Y=AX$.

If the real probability model is replaced by one of the two complex models
then $L/2$ can be replaced by $L$ in \eqref{prob:bound}.
\end{Theorem}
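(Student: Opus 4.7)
The plan is to verify the deterministic sufficient condition of Corollary~\ref{cor_reconstruct} with high probability: we need $\|\sgn(X^S)^* A_S^\dagger a_\ell\|_2 < 1$ to hold simultaneously for every $\ell \notin S$. Fix $\ell \notin S$, set $v := A_S^\dagger a_\ell \in \C^k$, and let $Z_j$ denote the $j$-th row of $\sgn(X^S)$ regarded as a unit-norm column vector in $\R^L$ (real models) or $\C^L$ (complex models). Then
$$\sgn(X^S)^* v = \sum_{j \in S} v_j \overline{Z_j},$$
a random weighted sum with deterministic coefficients $v_j$ and random directions $Z_j$. This is exactly the object to which Theorem~\ref{thm:Bernstein} and Corollary~\ref{cor:Bernstein} apply.

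Next I would check that the $Z_j$ are i.i.d.\ uniform on $S^{L-1}$ (respectively $S_\C^{L-1}$) under all four probability models. Writing $X^j = \sigma_j \Phi^j$, the row-wise normalization in the definition of $\sgn$ cancels the positive scalar $\sigma_j$, so that $Z_j = \Phi^j/\|\Phi^j\|_2$; in the spherical models $\Phi^j$ is uniform on the sphere by construction, and in the Gaussian models rotational invariance of the standard normal produces the same distribution after normalization. Independence across $j$ is inherited from that of the rows of $\Phi$, and complex conjugation preserves the uniform distribution on $S_\C^{L-1}$. Applying Theorem~\ref{thm:Bernstein} (real models) or Corollary~\ref{cor:Bernstein} (complex models) with $a = v$ and $u = 1/\alpha > 1$, and using $\|v\|_2 \leq \alpha$ so that $u\|v\|_2 \leq 1$, yields
$$\P\bigl(\|\sgn(X^S)^* A_S^\dagger a_\ell\|_2 \geq 1\bigr) \leq \exp\bigl(-\tfrac{L}{2}(\alpha^{-2} - \log(\alpha^{-2}) - 1)\bigr),$$
with $L/2$ replaced by $L$ in the complex case. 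A union bound over the at most $N$ indices $\ell \notin S$ gives the failure probability in \eqref{prob:bound}, and the complement is the event on which Corollary~\ref{cor_reconstruct} applies and $X$ is the unique minimizer of \eqref{l1l2}.

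The only (minor) obstacles are bookkeeping. First, one must confirm that allowing $\Sigma$ to be random does not spoil the i.i.d.\ spherical distribution of the $Z_j$; this is immediate since $\sigma_j$ cancels entirely in $\sgn(X^j)$, so $\Sigma$ drops out of the argument regardless of its distribution. Second, $v = A_S^\dagger a_\ell$ depends only on $A$ and the fixed support $S$, never on $X$, so it may be treated as deterministic when applying the Bernstein-type inequalities to the sum over $j$. With these observations the proof reduces mechanically to invoking Theorem~\ref{thm:Bernstein} or Corollary~\ref{cor:Bernstein}, so I do not anticipate any genuinely hard step beyond having the spherical tail bounds in place.
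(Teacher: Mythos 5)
Your proposal is correct and follows essentially the same route as the paper's proof: reduce to the sufficient condition of Corollary~\ref{cor_reconstruct}, observe that the rows of $\sgn(X^S)$ are i.i.d.\ uniform on the sphere (with $\Sigma$ cancelling), apply Theorem~\ref{thm:Bernstein} (or Corollary~\ref{cor:Bernstein}) with $u=\alpha^{-1}$, and take a union bound over $\ell \notin S$. The only cosmetic difference is that the paper's union bound yields the slightly sharper factor $N-k$ in place of $N$.
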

For $\alpha<1$ we are guaranteed that the exponent in \eqref{prob:bound} has a negative
argument, and therefore the error decays exponentially in $L$.

\begin{proof} First observe that by the rotational invariance of Gaussian random vectors
the columns of $\sgn(X^S)^* = \sgn(\Phi^*)$ are independent and
uniformly distributed on the real sphere, and the same is also true
if we use the real spherical random model. Denote $b^{(\ell)} =
A_S^\dagger a_\ell$ for $\ell \notin S$ and by $Z_j$,
$j=1,\hdots,k$ a sequence of independent random vectors that are
uniformly distributed on the sphere $S^{L-1}$. Using the
sufficient recovery condition of Corollary \ref{cor_reconstruct},
the union bound and Theorem~\ref{thm:Bernstein} we can estimate
the probability that $\ell_{2,1}$ minimization fails to recover
$X$ by
\begin{align}
&\hspace*{-0.2in} \P(\max_{\ell \notin S} \|\sgn(X^S)^*b^{(\ell)}\|_2 > 1)\notag\\
&\leq \sum_{\ell \notin S} \P(\|\sgn(X^S)^* b^{(\ell)}\|_2
> 1) \notag\\
& \leq \sum_{\ell \notin S} \P\bl\left\| \sum_{j = 1}^k
b^{(\ell)}_j Z_j\right\|_2 > \alpha^{-1} \|b^{(\ell)}\|_2 \br \notag\\
&\leq
(N-k) \exp\left(-\frac{L}{2}(\alpha^{-2} - \log(\alpha^{-2}) -
1)\right).\notag
\end{align}
The complex case follows analogously using  Corollary \ref{cor:Bernstein}.
\end{proof}
For $L=1$, Theorem~\ref{thm_average2} is contained implicitly in
\cite[Theorem 13]{tr06-2}.
The appearance of the $2$-norm in \eqref{cond2} instead of the
$1$-norm as in \eqref{cond1} makes the condition of the theorem
weaker than worst-case estimates
(recall that $\|x\|_2 \leq \|x\|_1 \leq \sqrt{k} \|x\|_2$ for any length-$k$
vector $x$). In Section~\ref{sec:bn} this will be made more evident
when we consider conditions on the coherence $\mu$ and the RIP
constant to allow for recovery with high probability. The
requirement we obtain on $\mu$ is weaker than that of
Proposition~\ref{prop:mu} and allows for recovery with $k$ on the
order of $n$, while the worst-case results limit recovery to order
$\sqrt{n}$. Furthermore, in contrast to the worst-case results
which depend on $\delta_{2k}$, we will show that high-probability
recovery is possible as long as $\delta_{k+1}$ is small enough.

It is evident from \eqref{prob:bound} that the failure probability
decays exponentially with growing number of channels $L$.
Moreover, the bound is also useful for small $L$, and in
particular for the monochannel case $L=1$. Indeed, a simple
algebraic manipulation shows that the failure probability is less
than $\epsilon$ provided $\|A_S^\dagger a_\ell\|_2 \leq \alpha$
for all $\ell \notin S $ with $\alpha$ satisfying
\[
\alpha^{-2} - \log(\alpha^{-2}) \geq \frac{2 \log(N/\epsilon)}{L} + 1.
\]
This provides a useful average-case analysis even for $L=1$.

For completeness, we also state an alternative recovery result below
which provides a slightly better probability estimate
than Theorem \ref{thm_average2} for very large values of $N$. However,
the required condition on $\|A_S^\dagger a_\ell\|_2$ is stronger.




\begin{Theorem}\label{thm_average}
Let $S \subset \{1,\hdots,N\}$ be a set of cardinality $k$, and
let $X \in \R^{N\times L}$ be random sparse coefficients
with $\supp X = S$ given by the real Gaussian probability model.
If
\begin{equation}\label{cond2}
\|A_S^\dagger a_\ell\|_2 < \frac{A_L}{3\sqrt{L}+2\sqrt{k}} =:\gamma \sim
\frac{1}{3+2\sqrt{k/L}} 
\end{equation}
for all $\ell \notin S$,
where
\begin{equation}\label{def:AL}
A_L=\sqrt{2}\frac{\Gamma((L+1)/2)}{\Gamma(L/2)} \sim \sqrt{L},
\end{equation}
and $\Gamma$ denotes the Gamma function, then with probability at
least
\[
P=1 - \exp(-L/8) - k\exp(-A_L^2/8) 
\]
\eqref{l1l2} recovers $X$ from $Y = AX$.
\end{Theorem}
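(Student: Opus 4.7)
The plan is to invoke Corollary~\ref{cor_reconstruct}: it suffices to show that $\|\sgn(X^S)^* A_S^\dagger a_\ell\|_2 < 1$ simultaneously for all $\ell \notin S$ with the claimed probability. The structural observation that drives the proof is that, since $X^S = \Sigma \Phi$ with $\Sigma$ positive-diagonal, the signs depend only on $\Phi$: setting $R_j = \|\Phi^j\|_2$ for $j \in S$ and letting $D$ be the diagonal matrix with diagonal entries $R_j$, one has $\sgn(X^S) = D^{-1} \Phi$, hence $\sgn(X^S)^* = \Phi^* D^{-1}$. Submultiplicativity of the spectral norm then gives
\[
\|\sgn(X^S)^* A_S^\dagger a_\ell\|_2 \leq \|\Phi^*\|_2\,\|D^{-1}\|_2\,\|A_S^\dagger a_\ell\|_2.
\]
Unlike the per-$\ell$ union-bound strategy used in Theorem~\ref{thm_average2}, this operator-norm bound is uniform in $\ell \notin S$, which is the conceptual reason why no factor $N-k$ appears in the failure probability.

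I would then control the two random operator norms separately. For $\|\Phi^*\|_2$, Gordon's inequality gives $\E\|\Phi^*\|_2 \leq \sqrt{L} + \sqrt{k}$, and $\Phi \mapsto \|\Phi^*\|_2$ is $1$-Lipschitz in the Frobenius norm, so Gaussian concentration yields
\[
\P\bigl(\|\Phi^*\|_2 \geq \sqrt{L} + \sqrt{k} + t\bigr) \leq \exp(-t^2/2).
\]
Choosing $t = \sqrt{L}/2$ gives $\|\Phi^*\|_2 \leq \tfrac{3}{2}\sqrt{L} + \sqrt{k}$ with probability at least $1 - \exp(-L/8)$, which accounts for the first term in the probability bound.

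For $\|D^{-1}\|_2 = 1/\min_{j\in S} R_j$, each $R_j$ is $\chi_L$-distributed with $\E R_j = A_L$ by the very definition \eqref{def:AL}, and $\Phi^j \mapsto \|\Phi^j\|_2$ is $1$-Lipschitz, so Gaussian concentration gives $\P(R_j \leq A_L - t) \leq \exp(-t^2/2)$. Taking $t = A_L/2$ and union bounding over the $k$ indices in $S$ produces $\min_{j\in S} R_j \geq A_L/2$, hence $\|D^{-1}\|_2 \leq 2/A_L$, with probability at least $1 - k\exp(-A_L^2/8)$, accounting for the second term.

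Combining the two events via a union bound, with probability at least $1 - \exp(-L/8) - k\exp(-A_L^2/8)$ one has
\[
\|\Phi^* D^{-1}\|_2 \leq \bigl(\tfrac{3}{2}\sqrt{L} + \sqrt{k}\bigr)\cdot \tfrac{2}{A_L} = \tfrac{3\sqrt{L} + 2\sqrt{k}}{A_L} = \tfrac{1}{\gamma},
\]
so the hypothesis $\|A_S^\dagger a_\ell\|_2 < \gamma$ forces $\|\sgn(X^S)^* A_S^\dagger a_\ell\|_2 < 1$ uniformly in $\ell \notin S$, and Corollary~\ref{cor_reconstruct} delivers recovery. The main non-routine point is really the conceptual one made in the first paragraph, namely the factorization $\sgn(X^S)^* = \Phi^* D^{-1}$ together with the decision to use a uniform operator-norm bound rather than Bernstein-type bounds on each $\ell$; once that is set up, the remaining analysis consists of invoking Gordon's sharp bound and standard chi-concentration, and the constants $3\sqrt{L}+2\sqrt{k}$ and $A_L$ in the definition of $\gamma$ are chosen precisely so that the two pieces mesh.
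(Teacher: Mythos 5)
Your proposal is correct and follows essentially the same route as the paper's own proof: the factorization $\sgn(X^S)=\Phi^*D^{-1}$ (the paper writes $D\Phi$ with $D$ the inverse row norms), the submultiplicative operator-norm bound uniform over $\ell\notin S$, the $\sqrt{L}+\sqrt{k}+t$ concentration for $\|\Phi\|_2$, the $\chi_L$ concentration around $A_L$ with a union bound over $S$, and the choices $t=\sqrt{L}/2$ and $A_L/2$ yielding exactly the stated constants. The only cosmetic difference is that you cite Gordon's inequality where the paper cites Szarek and Cand\`es--Tao for the same expectation bound.
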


It follows from Stirling's formula
$\Gamma(z) \sim \sqrt{2\pi z} z^{z-1/2} e^{-z}$,
that
\begin{align}
&\hspace*{-0.1in} A_L = \sqrt{2}
\frac{\Gamma((L+1)/2)}{\Gamma(L/2)} \sim \sqrt{2}
\frac{((L+1)/2)^{L/2} e^{-(L+1)/2}}{(L/2)^{(L-1)/2}e^{-L/2}}
\notag\\
& \hspace*{0.08in} = e^{-1/2}\frac{(L+1)^{L/2}}{L^{(L-1)/2}} =
e^{-1/2} \left(L (1+1/L)^L\right)^{1/2}
 \sim \sqrt{L}.\notag
\end{align}
Moreover, for all
$L\geq 1$ it holds that $\sqrt{L} \geq A_L \geq \sqrt{ \frac{2}{\pi}} \sqrt{L}
\approx 0.797 \sqrt{L}$.


Note that
$\gamma =\frac{A_L}{3\sqrt{L}+2\sqrt{k}}$
 is monotonically increasing in $L$.
 In addition, the probability $P$ is also increasing (towards $1$) in
 $L$. Therefore, more channels increase the
 probability of success and in addition relax the requirements on
 the matrix $A$.

\begin{proof} To prove the theorem we show that if (\ref{cond2})
is satisfied, then condition (\ref{eq:conds}) of
Corollary~\ref{cor_reconstruct} holds with probability $P$.

To this end, let $\Phi \in \R^{k \times L}$ denote a random matrix
with independent standard normal distributed entries, and define
$D$ as the $k \times k$ diagonal matrix with
diagonal elements $1/s_j,j \in S$, where
$
s_j = \|\Phi^j\|_2 = \sqrt{\sum_{\ell=1}^L |\Phi_{j\ell}|^2}.
$
We can then express 
$\sgn(X^S) = \sgn(\Sigma \Phi)
= \sgn(\Phi) = D \Phi$. (This equation also means that the
diagonal matrix $\Sigma$ does not play any role.) Denoting $b_j =
A_{T}^\dagger a_j$ for  $j \notin S$,
\[
\|\sgn(X_S)^* b_j\|_2 = \| \Phi^* D b_j\|_2 \leq \|\Phi\|_{2}
\|D\|_{2} \|b_j\|_2.
\]
By the assumption of the theorem $ \|b_j\|_2<\gamma$ where
$\gamma$ is defined by 
(\ref{cond2}). It therefore remains to
bound $\|\Phi\|_2$ and $\|D\|_2$.
>From \cite[equation (4.35)]{cata06}, see also \cite{sz91}, the
operator norm of $\Phi$ satisfies
\begin{equation}\label{opnorm}
\|\Phi\|_{2} \leq \sqrt{L} + \sqrt{k} + r
\end{equation}
with probability
at least $1 - \exp(- r^2/2)$.

Next we consider $\|D\|_2$. Observe that the $s_j^2$ are $\chi^2(L)$
distributed.
Therefore, denoting a $\chi^2(L)$-variable by $Y$,
\begin{align}
&\E[s_j] = \E[ \sqrt{Y}] = \frac{1}{2^{L/2}\Gamma(L/2)} \int_0^\infty \sqrt{x} x^{L/2}e^{x/2} dx\notag\\
&\hspace*{0.28in} = \sqrt{2}\frac{\Gamma((L+1)/2)}{\Gamma(L/2)} =
A_L \sim \sqrt{L}.\notag
\end{align}
As a function of $\Phi^j$ the $s_j$ are Lipschitz continuous, \ie
$s_j(\Phi^{j} - \Psi^{j}) \leq \|\Phi^{j}-\Psi^{j}\|_2$.
Using these two observations we rely on the following standard
concentration of measure result, see e.g.~\cite[eq.~(2.35)]{le01}
or \cite[eq.~(1.6)]{leta91}.
\begin{Theorem}\label{thm_conc} Let $f$ be a Lipschitz function on
$\R^L$, \ie 
$|f(x)-f(y)| \leq B\|x-y\|_2$
for all $x,y \in \R^{L}$. Further assume
that $Z = (Z_1,Z_2,\hdots,Z_L)$ is a vector of independent standard Gaussian
random variables. Then
\begin{align}
\P(f(Z) \geq  \E[f(Z)] + t) & \leq
\exp\left(-\frac{t^2}{2B^2}\right), \nonumber \\
\quad \P(f(Z) \leq  \E[f(Z)] - t)& \leq
\exp\left(-\frac{t^2}{2B^2}\right).\notag
\end{align}
\end{Theorem}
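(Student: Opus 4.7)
The plan is to establish the standard sub-Gaussian moment generating function bound $\E[\exp(\lambda(f(Z)-\E f(Z)))] \leq \exp(\lambda^2 B^2/2)$ for every $\lambda \in \R$, and then to derive both tail inequalities from it by a Chernoff/optimization argument. Note that the lower tail statement follows from the upper tail applied to $-f$, which has the same Lipschitz constant $B$, so it suffices to prove the upper tail. By a standard mollification (convolving $f$ with a narrow Gaussian kernel and passing to the limit) I may assume without loss of generality that $f$ is smooth and satisfies $\|\nabla f(x)\|_2 \leq B$ pointwise.

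The heart of the argument is the Herbst recipe driven by the Gaussian logarithmic Sobolev inequality, which asserts that for every smooth $g:\R^L \to \R$ with finite second moment under the standard Gaussian measure,
\begin{equation*}
\E[g(Z)^2 \log g(Z)^2] - \E[g(Z)^2]\log \E[g(Z)^2] \leq 2\,\E[\|\nabla g(Z)\|_2^2].
\end{equation*}
Applying this with $g = \exp(\lambda f / 2)$ and writing $\psi(\lambda) := \E[\exp(\lambda f(Z))]$, a direct computation using $\|\nabla f\|_2 \leq B$ turns the log-Sobolev inequality into the differential inequality
\begin{equation*}
\lambda \psi'(\lambda) - \psi(\lambda)\log \psi(\lambda) \leq \tfrac{\lambda^2 B^2}{2}\,\psi(\lambda).
\end{equation*}
Dividing by $\lambda^2 \psi(\lambda)$ and recognizing the left-hand side as $\frac{d}{d\lambda}\bigl(\lambda^{-1}\log \psi(\lambda)\bigr)$, I obtain $\frac{d}{d\lambda}(\lambda^{-1}\log\psi(\lambda)) \leq B^2/2$. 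Integrating from $0^+$ to $\lambda > 0$, with boundary value $\lim_{\lambda \downarrow 0}\lambda^{-1}\log\psi(\lambda) = \E[f(Z)]$, yields $\log \psi(\lambda) \leq \lambda \E[f(Z)] + \lambda^2 B^2 / 2$, which is precisely the desired sub-Gaussian MGF bound.

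With the MGF bound in hand, the tail estimate is routine: Markov's inequality gives $\P(f(Z)-\E f(Z) \geq t) \leq \exp(-\lambda t)\,\E[\exp(\lambda(f(Z)-\E f(Z)))] \leq \exp(-\lambda t + \lambda^2 B^2/2)$ for every $\lambda > 0$, and optimizing at $\lambda = t/B^2$ delivers the bound $\exp(-t^2/(2B^2))$. The main obstacle in this plan is the Gaussian log-Sobolev inequality itself, which is not elementary; I would either invoke it as a black box (Gross's inequality, proved by tensorization from a two-point Bernoulli inequality together with the central limit theorem, or alternatively via hypercontractivity of the Ornstein--Uhlenbeck semigroup) or, if a self-contained argument is desired, replace the log-Sobolev route by Maurey's Gaussian interpolation: write $Z' = \sin(\theta) Z + \cos(\theta)\tilde Z$ for an independent copy $\tilde Z$, compute $\frac{d}{d\theta}\E[e^{\lambda f(Z)-\lambda f(Z')}]$, and use $\|\nabla f\|_2\leq B$ together with Jensen's inequality to obtain the same sub-Gaussian MGF bound without appealing to log-Sobolev machinery.
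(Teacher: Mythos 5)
Your proof is correct, but note that the paper does not prove Theorem~\ref{thm_conc} at all: it is quoted as a standard concentration-of-measure result with pointers to Ledoux \cite[eq.~(2.35)]{le01} and Ledoux--Talagrand \cite[eq.~(1.6)]{leta91}, so there is no in-paper argument to compare against. What you have written is essentially the canonical derivation that appears in the cited reference: reduce to the upper tail via $-f$, mollify to get $\|\nabla f\|_2 \leq B$ pointwise, run the Herbst argument off the Gaussian logarithmic Sobolev inequality to obtain the sub-Gaussian bound $\E[\exp(\lambda(f(Z)-\E f(Z)))] \leq \exp(\lambda^2 B^2/2)$, and finish with Chernoff at $\lambda = t/B^2$. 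All the computational steps check out --- the identification of $\lambda\psi'(\lambda)-\psi(\lambda)\log\psi(\lambda)$ with the entropy of $e^{\lambda f}$, the recognition of the left-hand side as $\lambda^2\psi(\lambda)\,\frac{d}{d\lambda}(\lambda^{-1}\log\psi(\lambda))$, and the boundary value $\lim_{\lambda\downarrow 0}\lambda^{-1}\log\psi(\lambda)=\E[f(Z)]$ --- and the constant $\exp(-t^2/(2B^2))$ comes out exactly as stated. The one non-elementary ingredient, Gross's log-Sobolev inequality, is invoked as a black box, which you acknowledge honestly and for which you supply a legitimate self-contained alternative (the Maurey--Pisier Gaussian interpolation $Z'=\sin(\theta)Z+\cos(\theta)\tilde Z$); either route is acceptable, and within the conventions of this paper simply citing the result, as the authors do, would also have sufficed.
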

Our goal is to show that $\|D\|_2$ is bounded from above, which is
equivalent to bounding the smallest value of $s_j$ from below.
Applying Theorem~\ref{thm_conc} to $s_j$,
\[
\P(s_j < A_L(1 - t)) \leq \exp(-t^2A_L^2/2),
\]
where we used the fact that $B=1$ and $\E[s_j]=A_L$.
Using a union bound over all $j$, we obtain
\begin{align}
&\hspace*{-0.2in} \P(s_j < A_L(1-t),\forall j) =
\P\left(\min_{j=1,\hdots,k}
s_j < A_L(1-t)\right)  \nonumber \\
& \leq  \sum_{j \in S} \P(s_j < A_L(1-t)) 
 =  k \exp(-t^2 A_L^2/2). \notag 
\label{minsj}
\end{align}
Assuming that $\min_{j\in S} s_j \geq A_L(1-t)$ holds, $\|D\|_{2}
\leq 1/(A_L(1-t))$. Combining this bound with \eqref{opnorm} for
$r = \sqrt{L} s$ we have
\begin{align}
&\hspace*{-0.2in} \|\sgn(X_S) A_S^\dagger a_j\|_2 \leq
\frac{\sqrt{k} + \sqrt{L} +
s\sqrt{L}}{A_L(1-t)}\gamma \notag\\
&= \frac{(s+1 +
\sqrt{k/L})\gamma\sqrt{L}}{(1-t)A_L}.\notag
\end{align}
Choosing $s=t=1/2$,
\begin{equation}\label{est1}
\|\sgn(X_S)^* A_S^\dagger a_j \|_2 \leq (3+2\sqrt{k/L}) \gamma
\sqrt{L}/A_L < 1.
\end{equation}
>From (\ref{est1}) and Corollary \ref{cor_reconstruct}, $X$ is
recoverable using (\ref{l1l2}).

The probability that (\ref{est1}) does not hold can be computed by
applying a union bound to the probabilities that the spectral
norms of each of the matrices $\Phi$ and $D$ are not bounded. This
shows that (\ref{est1}) does not hold with probability at most
$
\exp(-L/8) + k \exp(-A_L^2/8) 
$
completing the proof of the theorem.
\end{proof}

\section{Bounded Norm Condition}
\label{sec:bn}

Both Theorems~\ref{thm_average2} and \ref{thm_average} state that
$X$ can be recovered with high probability from $Y$, as long as
$\|A_S^\dagger a_\ell\|_2$ is bounded. In this section we develop
several different conditions under which this holds.

\begin{proposition}\label{prop:bound:mu} Let $A \in \C^{n \times N}$
have unit-norm columns and coherence $\mu$, and let $S \subset
\{1,\hdots,N\}$ be a set of cardinality $k$. Assume that
\begin{equation}\label{condmu2}
(\sqrt{k} + (k-1)\delta) \mu < \delta
\end{equation}
for some $\delta > 0$. Then $\|A_S^\dagger a_\ell\|_2 \leq \delta$
for all $\ell \notin S$.
\end{proposition}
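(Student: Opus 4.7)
The plan is to bound $\|A_S^\dagger a_\ell\|_2$ by splitting it through the factorization $A_S^\dagger = (A_S^* A_S)^{-1} A_S^*$, controlling the two pieces separately using the coherence assumption, and then rearranging the resulting inequality to match the hypothesis.

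First I would write $A_S^\dagger a_\ell = (A_S^* A_S)^{-1} v$ with $v := A_S^* a_\ell \in \C^k$, and bound
\[
\|A_S^\dagger a_\ell\|_2 \leq \|(A_S^* A_S)^{-1}\|_2\, \|v\|_2.
\]
For $\|v\|_2$: since $\ell \notin S$, each entry $v_j = \langle a_j, a_\ell\rangle$ with $j \in S$ satisfies $|v_j| \leq \mu$, so $\|v\|_2 \leq \sqrt{k}\,\mu$.

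Next I would handle $\|(A_S^*A_S)^{-1}\|_2$. Writing $A_S^*A_S = I + E$, the matrix $E$ has vanishing diagonal (unit-norm columns) and off-diagonal entries bounded by $\mu$ in magnitude. Applying Gershgorin's disk theorem (or a direct Neumann series argument) to $E$ gives $\|E\|_2 \leq (k-1)\mu$, hence the smallest eigenvalue of $A_S^*A_S$ is at least $1 - (k-1)\mu$, provided this quantity is positive. The hypothesis $(\sqrt{k} + (k-1)\delta)\mu < \delta$ does in fact force $(k-1)\mu < 1$, since otherwise the left hand side would already exceed $\delta$. Thus
\[
\|(A_S^*A_S)^{-1}\|_2 \leq \frac{1}{1 - (k-1)\mu}.
\]

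Combining the two estimates yields
\[
\|A_S^\dagger a_\ell\|_2 \leq \frac{\sqrt{k}\,\mu}{1 - (k-1)\mu}.
\]
To conclude, I would simply rearrange: this upper bound is $\leq \delta$ iff $\sqrt{k}\,\mu \leq \delta\bigl(1 - (k-1)\mu\bigr)$, i.e.\ iff $(\sqrt{k} + (k-1)\delta)\mu \leq \delta$, which is precisely the hypothesis \eqref{condmu2}. There is no real obstacle; the only subtlety worth flagging is checking that $1 - (k-1)\mu > 0$ so the Neumann-type bound is valid, and this is forced by the hypothesis itself as noted above.
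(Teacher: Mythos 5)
Your proof is correct and follows essentially the same route as the paper: the factorization $A_S^\dagger = (A_S^*A_S)^{-1}A_S^*$, the entrywise coherence bound $\|A_S^*a_\ell\|_2 \leq \sqrt{k}\,\mu$, and the Gershgorin estimate $\|(A_S^*A_S)^{-1}\|_2 \leq (1-(k-1)\mu)^{-1}$, combined and rearranged against \eqref{condmu2}. Your explicit check that the hypothesis forces $(k-1)\mu < 1$ is a small but welcome addition that the paper leaves implicit.
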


\begin{proof}
Gershgorin's disk theorem implies that the smallest eigenvalue
$\lambda_{\min}$ of $A_S^* A_S$ is bounded from below by
$1-(k-1)\mu$. In particular, $A_S^*A_S$ is invertible provided
$(k-1)\mu < 1$. Further, \[ \|A_S^* a_\ell\|_2 = \sqrt{\sum_{j \in
S} |\langle a_\ell, a_j \rangle|^2} \leq \sqrt{k} \mu,\] since by
definition, $|\langle a_\ell, a_j \rangle| \leq \mu$. Now, using
the fact that  $A_S^\dagger = (A_S^* A_S)^{-1} A_S^*$,
\begin{align}
\|A_S^\dagger a_\ell\|_2 &\leq \|(A_S^* A_S)^{-1}\|_{2}
\|A_S^*a_\ell\|_2\notag\\
&\leq (1-(k-1)\mu)^{-1} \sqrt{k} \mu < \delta,\notag
\end{align}
where the last inequality follows from the fact that
(\ref{condmu2}) implies $\delta>\sqrt{k}/(1-(k-1)\mu)^{-1}$.
\end{proof}

Condition \eqref{condmu2} is slightly weaker than \eqref{condmu1}
as long as $\delta>1/\sqrt{k}$. This follows from the $2$-norm
that replaced the $1$-norm in the upper bound. However, \eqref{condmu2} still
suffers the square-root bottleneck $k = {\cal O}(\sqrt{n})$. To
improve on this result, we next provide a condition based on the
following refinement of the RIP of $A$.
For a set $S \subset \{1,\hdots,N\}$ we let
\[
\delta(S) = \|A_S^* A_S - I\|_2.
\]
The restricted isometry constant $\delta_k$ of \eqref{def:RIP}
satisfies $ \delta_k = \max_{|S|\leq k} \|A_S^* A_S - I\|_2 $ so
that if $S$ has cardinality $k$ then $\delta(S) \leq \delta_k$. We
further define
\begin{equation}
\label{eq:deltas} \delta^*(S) = \max_{\ell \notin S} \delta(S \cup
\{\ell\}).
\end{equation}
Clearly, $\delta(S) \leq \delta^*(S) \leq \delta_{k+1}$. Finally,
we make use of the following ``local" $2$-coherence function,
\begin{equation}\label{def:mu2}
\mu_2(S) = \max\left\{ \max_{\ell \notin S} \|A_S^* a_\ell\|_2,
\max_{\ell \in S} \|A_{S \setminus \ell}^* a_\ell\|_2 \right\}
\end{equation}
for a subset $S \subset \{1,\hdots,N\}$, where $S \setminus \ell$
denotes the elements in $S$ excluding the $\ell$th one.
 From the
definition of the coherence it follows immediately that
\begin{equation}\label{mudelta:est1}
\mu_2(S) \leq \sqrt{|S|} \mu,
\end{equation}
since the magnitude of each element $|\langle a_\ell, a_j\rangle|$
of the vector $A_S^*a_\ell$ is bounded above by $\mu$. In
addition,
\begin{equation}\label{mudelta:est}
\mu_2(S) \leq \delta^*(S).
\end{equation}
This is a result of the fact that $A_S^*a_\ell$ is a submatrix of
$A_{S \cup \{\ell\}}^* A_{S \cup \{\ell\}} - I$
 for $\ell \notin
S$, while $A_{S \setminus \{\ell\}}^* a_\ell$ is a submatrix of
$A_S^* A_S - I$ for $\ell \in S$. (They both consist of a
subcolumn of the respective matrix, that ``leaves" out the
diagonal element.) We now use these definitions to bound
$\|A_S^\dagger a_\ell\|_2$:
\begin{proposition}\label{lem_RIP}
Let $S\subset \{1,\hdots,N\}$. Then: 
\begin{itemize}
\item[(a)]
If $A$ satisfies $\delta^*(S) \leq \delta < 1/2$ 
then
\[
\|A_S^\dagger a_\ell\|_2 \leq \frac{\delta}{1-\delta} < 1\quad
\mbox{ for all }\ell \notin S.
\]
\item[(b)] If $A$ satisfies $\delta(S) \leq \delta < 1$ and
$\mu_2(S) \leq \eta$ then
\[
\|A_S^\dagger a_\ell\|_2 \leq \frac{\eta}{1-\delta}.
\]
\end{itemize}
\end{proposition}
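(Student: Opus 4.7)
The plan is to prove part (b) first and then extract part (a) as a direct consequence. For (b), I would exploit the explicit factorization $A_S^\dagger = (A_S^* A_S)^{-1} A_S^*$ and use submultiplicativity of the spectral norm to write
\[
\|A_S^\dagger a_\ell\|_2 \leq \|(A_S^* A_S)^{-1}\|_2 \, \|A_S^* a_\ell\|_2.
\]
The two factors are then controlled separately by the two hypotheses.

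For the first factor, the definition $\delta(S) = \|A_S^* A_S - I\|_2$ together with $\delta(S) \leq \delta < 1$ places the eigenvalues of the Hermitian matrix $A_S^* A_S$ in the interval $[1-\delta,\, 1+\delta]$; in particular $A_S^*A_S$ is invertible and $\|(A_S^* A_S)^{-1}\|_2 \leq (1-\delta)^{-1}$. For the second factor, the definition of $\mu_2(S)$ in \eqref{def:mu2} gives $\|A_S^* a_\ell\|_2 \leq \mu_2(S) \leq \eta$ for every $\ell \notin S$. Multiplying these two bounds yields the stated estimate $\|A_S^\dagger a_\ell\|_2 \leq \eta/(1-\delta)$, completing (b).

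For (a), I would deduce it from (b) by showing that the hypothesis $\delta^*(S) \leq \delta$ simultaneously controls both quantities entering (b) with the single parameter $\delta$. First, since $A_S^* A_S - I$ is a principal submatrix of $A_{S \cup \{\ell\}}^* A_{S \cup \{\ell\}} - I$ for any $\ell \notin S$, and the spectral norm of a Hermitian matrix dominates that of any principal submatrix, we get $\delta(S) \leq \delta^*(S) \leq \delta$. Second, the submatrix observation \eqref{mudelta:est} directly gives $\mu_2(S) \leq \delta^*(S) \leq \delta$. Applying (b) with $\eta = \delta$ then yields $\|A_S^\dagger a_\ell\|_2 \leq \delta/(1-\delta)$, and the strict inequality $\delta < 1/2$ guarantees this bound is strictly less than $1$.

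No step appears to be a genuine obstacle: the whole argument is operator-norm bookkeeping. The only subtlety worth double-checking is the submatrix monotonicity used in both (a) and (b), namely that for a Hermitian matrix $M$ and a principal submatrix $M'$ one has $\|M'\|_2 \leq \|M\|_2$; this follows from the variational characterization of the spectral norm of a Hermitian matrix by extending any unit test vector $x$ for $M'$ by zeros to a unit test vector for $M$, which is how both $\delta(S) \leq \delta^*(S)$ and \eqref{mudelta:est} are established.
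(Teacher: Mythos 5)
Your proof is correct and follows essentially the same route as the paper: bound $\|(A_S^*A_S)^{-1}\|_2$ by $(1-\delta)^{-1}$ via the eigenvalue characterization of $\delta(S)$, bound $\|A_S^*a_\ell\|_2$ by the relevant quantity, and multiply. The only (harmless) difference is organizational --- you prove (b) first and obtain (a) from it via $\mu_2(S)\leq\delta^*(S)$, whereas the paper proves (a) directly by the same submatrix observation it records in \eqref{mudelta:est}.
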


\begin{proof} 
Denoting by $\lambda$ an
eigenvalue of $A_S^*A_S$, the definition of $\delta(S) \leq \delta^*(S) \leq \delta$ implies that $|1-\lambda| \leq
\delta$. Consequently, the smallest eigenvalue of $A_S^* A_S$ is
bounded from below by $1-\delta$ and therefore
\[
 \|(A_S^* A_S)^{-1}\|_{2} \leq
\frac{1}{1-\delta}.\]

For (a), as already noted above, $A_S^* a_\ell$ for $\ell \notin S$ is a
$k \times 1$ submatrix of $A_{T \cup \ell}^*A_{T\cup \ell}- I$.
Therefore, $\|A_S^* a_\ell\|_2 \leq \|A_{T \cup \ell}^*
A_{T \cup \ell} - I\|_{2} \leq \delta$, and
\begin{align}
&\hspace*{-0.5in} \|A_S^\dagger a_\ell\|_2 \leq \|(A_S^*A_S)^{-1} A_S^* a_\ell\|_2\notag\\
&\leq \|(A_S^* A_S)^{-1}\|_{2} \|A_S^* a_\ell\|_2 \leq
\frac{\delta}{1-\delta}.\notag
\end{align}
The proof of (b) follows from the fact that $\|A_S^* a_\ell\|_2
\leq \mu_2(S)$. A similar estimate as above yields $\|A_S^\dagger
a_\ell\|_2 \leq (1-\delta)^{-1}\eta$.
\end{proof}

Proposition~\ref{lem_RIP} applies if $\delta_{k+1}$ is small while
in contrast Theorem~\ref{thm_RIP} works with $\delta_{2k}$, which
is generally larger than $\delta_{k+1}$. By \eqref{RIPn} the
condition $\delta_{k+1} \leq \delta$ can be satisfied if $n \geq
C_\delta k \log(N/k)$. Working with $\delta^*(S)$ instead of
$\delta_{k+1}$ allows to improve on the bound \eqref{RIPn} for
Gaussian, Bernoulli and random spherical matrices.


\begin{proposition}\label{prop:spherical} Let $S \subset \{1,\hdots,N\}$ be a set of cardinality $k$
and suppose that $A = \frac{1}{\sqrt{n}} \Phi \in \R^{n\times N}$,
where $\Phi$ is drawn at random according to a standard Gaussian
or Bernoulli distribution (with expectation $0$ and variance
$1/n$). Then $\delta^*(S) \leq \delta$ with probability at least
$1-\epsilon$
provided that
\begin{equation}\label{delta:star}
n \geq C_1 \delta^{-2} \max \{k \log(1/\delta), \log(N/\epsilon)\}
\end{equation}
for a suitable constant.

The same statement holds (with possibly a different constant) for
a random matrix whose columns are chosen independently at random
according to the uniform distribution on a sphere.
\end{proposition}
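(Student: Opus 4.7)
The plan is to reduce the claim to the standard concentration argument that underlies the restricted isometry property, but applied to a single fixed support $S$ rather than to all supports of a given size. This reduction is exactly what lets us replace the usual $k\log(N/k)$ bound by the weaker $\max\{k\log(1/\delta),\log(N/\epsilon)\}$ bound.

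First I would observe that, by a union bound over the at most $N$ indices $\ell\notin S$, it suffices to show that for each fixed $\ell\notin S$ the set $T=S\cup\{\ell\}$ (of cardinality $m:=k+1$) satisfies
\[
\P\bl\|A_T^*A_T-I\|_2>\delta\br \leq \epsilon/N.
\]
For a fixed unit vector $x\in\R^m$, the quantity $\|A_Tx\|_2^2=\frac{1}{n}\sum_{i=1}^n |(\Phi_Tx)_i|^2$ is an average of $n$ i.i.d.\ random variables with mean $\|x\|_2^2=1$, each either chi-square (Gaussian case) or a bounded product of sub-Gaussians (Bernoulli case), so a Bernstein / Hanson--Wright type inequality gives
\[
\P\bl|\|A_Tx\|_2^2-1|>\delta/2\br \leq 2\exp(-c n\delta^2)
\]
for a universal constant $c>0$.

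Next I would carry out the standard covering/net argument. Pick a $(\delta/8)$-net $\NN$ of the unit sphere $S^{m-1}\subset\R^m$ with $|\NN|\leq (24/\delta)^m$. Apply the pointwise concentration bound at each element of $\NN$ and take a union bound to get
\[
\P\bl \exists x\in\NN: |\|A_Tx\|_2^2-1|>\delta/2\br \leq 2(24/\delta)^m \exp(-cn\delta^2).
\]
A standard extension lemma (as in Baraniuk--Davenport--DeVore--Wakin) then shows that control of the quadratic form on $\NN$ up to $\delta/2$ upgrades to $\|A_T^*A_T-I\|_2\leq \delta$ on the whole sphere. Combining with the union bound over $\ell\notin S$ and requiring the total failure probability to be at most $\epsilon$ yields the sufficient condition
\[
\log(2N/\epsilon)+(k+1)\log(24/\delta)\leq cn\delta^2,
\]
which rearranges to \eqref{delta:star}.

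For the spherical model the columns have fixed norm $1$, so I would handle it either by writing $a_i=g_i/\|g_i\|_2$ with $g_i$ standard Gaussian and transferring the Gaussian concentration estimate (using that $\|g_i\|_2/\sqrt{n}$ concentrates sharply around $1$), or by applying directly a concentration-of-quadratic-forms inequality on the sphere, which has the same Gaussian-type tail up to an adjustment of the constants. The main obstacle is really just bookkeeping: getting the concentration inequality in step~1 with the right constants for each of the three models and then checking that the net argument doesn't lose more than a constant factor in the exponent, so that only the single $\log(N/\epsilon)$ from the union bound over $\ell$ (rather than a full $k\log(N/k)$ from a union over all $k$-subsets) enters the final bound~\eqref{delta:star}.
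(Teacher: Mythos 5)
Your proposal is correct and follows essentially the same route as the paper: a union bound over the $N$ choices of $\ell\notin S$, a single-subset concentration bound of the form $C(c/\delta)^{k+1}\exp(-cn\delta^2)$ for $\|A_T^*A_T-I\|_2$ (which the paper simply cites from Rauhut--Schnass--Vandergheynst / Baraniuk et al.\ rather than re-deriving via the net argument as you do), and a reduction of the spherical model to the Gaussian one by writing each column as a normalized Gaussian vector and using concentration of its norm. The only difference is that you spell out the proof of the cited concentration lemma; the logic and the resulting condition \eqref{delta:star} are the same.
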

\begin{proof}
See Appendix~\ref{sec:deltas}.
\end{proof}
A straightforward extension of the proof, as in \cite{badadewa06},
also shows that a random matrix $A \in \R^{n\times N}$ with
independent columns drawn from the uniform distribution on the
sphere satisfies RIP, $\delta_k \leq \delta$ with probability at
least $1-\epsilon$ provided $n \geq C \delta^{-2} (k\log(N/k) +
\log(\epsilon^{-1}))$. Although this fact seems to be known, we are not aware
of reference where this is rigorously stated.



The next result relies on a theorem by Tropp \cite[Theorem
B]{tr06-2} that uses random support sets $S$ and allows to work
with the coherence $\mu$ alone. Note that choosing $S$ at random
is perfectly in line with an average-case analysis.
\begin{Theorem}\label{cor_cond2est}
Let $A \in \C^{n \times N}$ have unit norm columns and coherence
$\mu$. Let $S \subset \{1,\hdots,N\}$ be a set of cardinality
$k\geq 4$ chosen uniformly at random.
Let $\delta, \epsilon \in (0,1)$ and assume
that
\begin{align}\label{condmumu}
\mu^2 k \log(\epsilon^{-1}) & \leq c \delta^2,\\
\label{cond_normA}
 \frac{k}{N}\|A\|_{2}^2 & \leq \frac{\delta}{4e^{1/4}},
\end{align}
where $c = \frac{\log(2)e^{-1/2}}{4\cdot 144 \log(3)} \approx 6.64 \cdot 10^{-4}$.
Then
\begin{align}
\|A_S^\dagger a_\ell\|_2 \leq
\frac{\sqrt{c}\,\delta}{(1-\delta)\sqrt{\log(\epsilon^{-1})}}
\quad \mbox{ for all } \ell \notin S\notag
\end{align}
with probability at least $1-\epsilon$.
\end{Theorem}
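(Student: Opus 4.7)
The plan is to use the factorization
$\|A_S^\dagger a_\ell\|_2 \;\leq\; \|(A_S^*A_S)^{-1}\|_2\,\|A_S^* a_\ell\|_2$
and to bound the two factors separately via two high-probability events over the random choice of $S$, splitting the failure budget $\epsilon$ evenly between them and combining by a union bound.

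The first factor is handled by invoking Tropp's Theorem~B of \cite{tr06-2}, whose hypotheses are essentially \eqref{condmumu} and \eqref{cond_normA} (up to constants matched to absorb the factor $4e^{1/4}$). That theorem gives $\|A_S^* A_S - I\|_2 \leq \delta$ with probability at least $1-\epsilon/2$, so by a Neumann-series argument $\|(A_S^*A_S)^{-1}\|_2 \leq (1-\delta)^{-1}$ on the same event. This step is essentially a direct citation once the constants are aligned.

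For the second factor I would pass from the uniformly random subset of size $k$ to the Bernoulli selector model --- each index included independently with probability $k/N$ --- via the standard reduction used by Tropp, at the cost of a harmless constant in the failure probability. Letting $\xi_j$ denote the indicator that $j \in S$ and conditioning on $\ell \notin S$, one has
$\|A_S^* a_\ell\|_2^2 \;=\; \sum_{j \neq \ell} \xi_j \,|\langle a_j, a_\ell\rangle|^2,$
a sum of independent nonnegative random variables each bounded by $\mu^2$, with mean at most $\frac{k}{N}\|A^* a_\ell\|_2^2 \leq \frac{k}{N}\|A\|_2^2 \leq \delta/(4 e^{1/4})$ by \eqref{cond_normA}. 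A Bernstein inequality, with the boundedness term $\mu^2$ controlled by \eqref{condmumu} so that the subgaussian regime dominates, yields a tail at the scale required to give $\|A_S^* a_\ell\|_2 \leq \sqrt{c}\,\delta/\sqrt{\log(\epsilon^{-1})}$. A union bound over the at most $N$ choices of $\ell \notin S$ (absorbable because \eqref{condmumu} effectively ties $\log(\epsilon^{-1})$ to the relevant scale) produces this estimate uniformly with probability at least $1-\epsilon/2$.

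Intersecting the two events delivers the claimed product bound with probability at least $1 - \epsilon$. The main obstacle is constant tracking: the specific value $c = \log(2)\,e^{-1/2}/(4 \cdot 144 \log 3)$ and the prefactor $4 e^{1/4}$ in \eqref{cond_normA} emerge only after carefully matching Tropp's parameter choices in Theorem~B against the Bernstein tuning for the correlation sum, while the factor $144$ hints at an intermediate balancing step (likely squaring a factor of $12$ arising from variance/boundedness comparisons). Verifying that the Bernoulli-to-uniform reduction and the conditioning on $\ell \notin S$ do not spoil these constants is the delicate bookkeeping part of the argument, even though no new probabilistic ideas are required.
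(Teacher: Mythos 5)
Your first step coincides with the paper's: invoke Tropp's theorem on random subdictionaries to get $\|A_S^*A_S - I\|_2 \leq \delta$, hence $\|(A_S^*A_S)^{-1}\|_2 \leq (1-\delta)^{-1}$, with probability at least $1-\epsilon$. (The paper spends its entire failure budget here; there is no second event, so no $\epsilon/2$ split is needed. Also, the constant $144\log(3)/\log(2)$ and the prefactor $e^{-1/4}$ are simply what Tropp's condition becomes after setting his free parameter $s = \log(\epsilon^{-1})/\log(k/2)$ and using $k \geq 4$; they have nothing to do with a Bernstein step, and \eqref{cond_normA} is consumed entirely inside Tropp's hypothesis \eqref{condmu3}, where $\frac{k}{N}\|A\|_2^2$ appears additively.)

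The genuine problem is your treatment of the second factor. No probabilistic argument is needed or wanted: $A_S^* a_\ell$ is a vector with $k$ entries $\langle a_j, a_\ell\rangle$, each of modulus at most $\mu$, so \emph{deterministically} $\|A_S^* a_\ell\|_2 \leq \sqrt{k}\,\mu$, and \eqref{condmumu} converts this directly into $\sqrt{k}\,\mu \leq \sqrt{c}\,\delta/\sqrt{\log(\epsilon^{-1})}$, which is exactly the claimed numerator. Your proposed Bernstein route, as set up, would actually fail: you bound the mean of $\|A_S^*a_\ell\|_2^2$ by $\frac{k}{N}\|A\|_2^2 \leq \delta/(4e^{1/4})$, but the target threshold is $c\,\delta^2/\log(\epsilon^{-1})$ with $c \approx 6.64\cdot 10^{-4}$, which is typically orders of magnitude \emph{smaller} than $\delta/(4e^{1/4})$; no concentration inequality can place a nonnegative random variable below a threshold that its mean bound already exceeds. (The worst-case deterministic bound $k\mu^2$ on this sum is in fact at least as good as anything the Bernoulli-selector computation could deliver.) So the architecture of your proof is right, but the second half should be replaced by the one-line coherence bound; as written it contains a step that does not go through.
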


\begin{proof}
The proof relies on \cite[Theorem 12]{tr06-2}. The formulation
below follows from \cite{tr06-2} by setting $s =
\log(\epsilon^{-1})/\log(k/2)$ and estimating
$\log(k/2+1)/\log(k/2) \leq \log(3)/\log(2)$ for $k\geq 4$.
\begin{Theorem}\label{thm_Sropp} Assume $A \in \C^{n\times N}$ has unit norm columns and
coherence $\mu$. Let $S \subset \{1,\hdots,N\}$ be a set of
cardinality $k \geq 4$ chosen uniformly at random. The condition
\begin{equation}\label{condmu3}
\sqrt{144 \log(3)\log(2)^{-1}\mu^2 k \log(\epsilon^{-1})} + \frac{k}{N}\|A\|^2_{2} \leq e^{-1/4} \delta
\end{equation}
implies
\[
\P(\|A_S^*A_S - I\| \geq \delta) \leq \epsilon.
\]
\end{Theorem}
Using (\ref{condmumu}) and the value of $c$, the square-root in
(\ref{condmu3}) becomes $\delta/(2e^{1/4})$. Combining this with
(\ref{cond_normA}) shows that (\ref{condmu3}) is satisfied.
Therefore, $\|A_S^* A_S - I\|_{2} \leq \delta$ with probability at
least $1-\epsilon$, which implies that
\[\|(A_S^* A_S)^{-1}\|_{2} \leq \frac{1}{1-\delta}.\] Finally,
\begin{align}
&\hspace*{-0.2in} \|A_S^\dagger a_\ell\|_2 \leq \|(A_S^*
A_S)\|_{2}
\|A_S^*a_\ell\|_2 \leq \frac{1}{1-\delta} \sqrt{k} \mu \notag\\
&\hspace*{0.3in} \leq
\frac{\sqrt{c}\,\delta}{(1-\delta)\sqrt{\log(\epsilon^{-1})}}\notag
\end{align}
by using condition (\ref{condmumu}) once more.
\end{proof}

\subsection{Comparison With Worst-Case Results}

Our average-case analysis depends on $\|A_S^\dagger a_\ell\|_2 $,
while the classical condition \eqref{cond1} of Proposition
\ref{prop:cond1} depends on $\|A_S^\dagger a_\ell\|_1$ and is
therefore significantly stronger.
Proposition~\ref{lem_RIP} establishes that the $2$-norm condition
can be satisfied as long as $\delta_{k+1}<1/2$. This is clearly
weaker than the worst case condition $\delta_{2k} < \sqrt{2} - 1
\approx 0.41$ of Proposition \ref{thm_RIP}.

Let us now compare worst-case and average results based on the
coherence $\mu$, by relying on Theorem~\ref{cor_cond2est}. For
simplicity, we consider the case in which $A$ is a unit-norm tight
frame, for which $\|A\|^2_{2} = \frac{N}{n}$. In this case,
\eqref{cond_normA} is equivalent to
$k \leq \frac{\delta}{4e^{1/4}}n$. If additionally $\mu = c
/\sqrt{n}$, then conditions \eqref{condmumu} and
\eqref{cond_normA} are both satisfied for fixed $\epsilon, \delta$
provided
\[
k \leq C'n.
\]
This beats the square-root bottleneck and even removes the
$\log$-factor present in estimates for the restricted isometry
constants, see \eqref{RIPn}. Moreover, we have the additional
advantage that the coherence is much easier to estimate than the
restricted isometry constants.


%
%
%


Combining Theorem~\ref{cor_cond2est} with the average-case
analysis of Theorems \ref{thm_average2} and \ref{thm_average}
shows that for a unit norm tight frame $A$ of coherence $\mu$
multichannel sparse recovery by \eqref{l1l2} can be ensured in the
average-case provided $k \leq C \mu^{-2}$, which can be as small
as $k \leq C n$. Moreover, the failure probability decays
exponentially in the number of channels.

In the next sections we provide further examples when we discuss
particular choices of the matrix $A$.

\section{Comparison with Multichannel Greedy Algorithms}
\label{sec:compare}

We now compare our results regarding $\ell_{2,1}$ optimization to
those obtained for the greedy algorithms $p$-thresholding and
$p$-SOMP \cite{grrascva07}. These are multichannel versions of
simple thresholding and orthogonal matching pursuit. For $1\leq p
\leq \infty$ they produce a $k$-sparse signal $\hat{X}$ from
measurements $Y = AX$ using a greedy search. To this end, we
improve slightly
on previous average-case performance results in \cite{grrascva07} for these
algorithms in the noiseless setting.

\subsection{Greedy Methods}

In $p$-thresholding, we select a set $S$ of $k$ indices whose
$p$-correlation with $Y$ are among the $k$ largest:
\begin{equation}
  \label{eq:DefPThresh}
  \|a_{\ell}^* Y\|_p \geq \|a_{j}^* Y\|_p,\quad
  \forall \ell \in S, \forall j \notin S.
\end{equation}
After the support $S$ is determined, the non-zero coefficients of
$\hat{X}$ are computed via an orthogonal projection: $\hat{X}^S =
A_S^\dagger Y$.

The $p$-SOMP algorithm is an iterative procedure. At each
iteration, an atom index $\ell_m$ is selected, and a residual is
updated. At the first iteration the residual is simply $Y_0 = Y$.
After $M$ iterations, the set of selected atoms being $S_M=
\{\ell_m\}_{m=1}^M$, the new residual is computed as $Y_M =
Y-A_{S_M} X_M = (I-P_{S_M}) Y$ where $X_M = A_{S_M}^\dagger Y$ and
$P_{S_M} = A_{S_M}A_{S_M}^\dagger$ is the orthogonal projection
onto the linear span of the selected atoms. The next selected atom
$k_{M+1}$ is the one which maximizes the $p$-correlation with the
residual $Y_M$,
\begin{equation}
  \label{eq:DefOMPSelect}
\|a_{\ell_{M+1}}^* Y_M\|_p = \max_{1 \leq \ell \leq N}
\|a_{\ell}^* Y_M\|_p.
\end{equation}

Using the probability model \eqref{prob:model}
average-case recovery theorems for $p$-thresholding and $p$-SOMP
have been proven in \cite[Theorems 4,6,7,8]{grrascva07,grmarascva07}.
We improve slightly on these in the following. (Note, however, that \cite{grrascva07} also treats the noisy case.) Our first result generalizes
the one in \cite{KV07} to the multichannel setup.

\begin{Theorem}\label{thm:thresh}
Let $A \in \C^{n \times N}$ have unit norm columns and local $2$-coherence
function $\mu_2(S)$ defined in \eqref{def:mu2}.
Let $X \in \R^{N\times L}$ with $\supp X \subset S$ where $S
\subset \{1,\hdots,N\}$, and such that the coefficients on $S$ are
given by \eqref{prob:model}, $X^S = \Sigma \Phi$, where we choose the real
spherical model for $\Phi$.
Set $Y =AX$ and $R = \max_{j} \sigma_j/\min_{j}\sigma_j$.
If
\begin{equation}\label{cond:thresh}
\theta = R \mu_2(S) < 1,
\end{equation}
then the probability that $2$-thresholding applied to $Y$ fails to recover $X$ is bounded
by
\[
N \exp\left(-L/2 (\theta^{-2} - \log(\theta^{-2}) -1 ) \right).
\]
If we use the complex spherical model instead of the real spherical
model then $L/2$ in the above probability estimate may be replaced by $L$.
\end{Theorem}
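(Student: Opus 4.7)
\medskip\noindent\textbf{Proof plan.}
My plan is to translate the success of $2$-thresholding into a uniform tail bound on the row $\ell_2$-norms of $A^{*}Y$, and then to invoke the Bernstein-type estimate of Theorem~\ref{thm:Bernstein} (or Corollary~\ref{cor:Bernstein} in the complex case) together with a single union bound over the $N$ indices $m\in\{1,\ldots,N\}$. Concretely, $2$-thresholding recovers $X$ exactly when $\min_{\ell\in S}\|a_\ell^{*}Y\|_2>\max_{j\notin S}\|a_j^{*}Y\|_2$, and since $Y=A_S\Sigma\Phi$ one has, for every $m$, $a_m^{*}Y = \sum_{i\in S}\langle a_m,a_i\rangle\sigma_i\Phi^i$.

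First I would split each row into its ``diagonal'' and ``off-diagonal'' parts. For $m=\ell\in S$ write $a_\ell^{*}Y=\sigma_\ell\Phi^\ell+V^{(\ell)}$ with $V^{(\ell)}=\sum_{i\in S\setminus\{\ell\}}\langle a_\ell,a_i\rangle\sigma_i\Phi^i$, and for $m=j\notin S$ set $V^{(j)}=a_j^{*}Y$. By the definition~\eqref{def:mu2} of $\mu_2(S)$, the coefficient vector $c^{(m)}$ of $V^{(m)}$ has $\ell_2$-norm at most $\sigma_{\max}\mu_2(S)$ in both cases. Using $\|\Phi^\ell\|_2=1$ from the spherical model and the triangle inequality, $\|a_\ell^{*}Y\|_2\ge\sigma_\ell-\|V^{(\ell)}\|_2\ge\sigma_{\min}-\|V^{(\ell)}\|_2$, while $\|a_j^{*}Y\|_2=\|V^{(j)}\|_2$, so thresholding success reduces to controlling $\|V^{(m)}\|_2$ uniformly in $m$.

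The concentration input is Theorem~\ref{thm:Bernstein}: since each $V^{(m)}=\sum_i c_i^{(m)}\Phi^i$ is a weighted sum of independent random vectors uniform on $S^{L-1}$, setting $u=\theta^{-1}=1/(R\mu_2(S))>1$ so that $u\sigma_{\max}\mu_2(S)=\sigma_{\min}$ yields
\[
\P\bigl(\|V^{(m)}\|_2\ge\sigma_{\min}\bigr) \;\le\; \P\bigl(\|V^{(m)}\|_2\ge u\|c^{(m)}\|_2\bigr) \;\le\; \exp\!\left(-\tfrac{L}{2}(\theta^{-2}-\log\theta^{-2}-1)\right).
\]
A union bound over the $N$ values of $m$ then delivers the advertised failure probability $N\exp(-\tfrac{L}{2}(\theta^{-2}-\log\theta^{-2}-1))$. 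For the complex spherical case I would replace Theorem~\ref{thm:Bernstein} by Corollary~\ref{cor:Bernstein}, which merely substitutes $L$ for $L/2$ in the exponent.

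The main obstacle is matching the sharp constant $\theta=R\mu_2(S)$ stated in the theorem. A literal triangle inequality only yields thresholding success when $\|V^{(\ell)}\|_2+\|V^{(j)}\|_2<\sigma_\ell$, which would degrade the effective scale to $2R\mu_2(S)$. I would close this gap by exploiting the independence of $\Phi^\ell$ from $V^{(\ell)}$: the identity $\|a_\ell^{*}Y\|_2^2=\sigma_\ell^2+2\sigma_\ell\Re\langle\Phi^\ell,V^{(\ell)}\rangle+\|V^{(\ell)}\|_2^2$, combined with the concentration of $\Re\langle\Phi^\ell,V^{(\ell)}\rangle$ near zero on the sphere (at scale $\|V^{(\ell)}\|_2/\sqrt{L}$), lets one replace the naive triangle bound by $\|a_\ell^{*}Y\|_2\ge\sigma_{\min}$ up to a negligible correction, thereby reducing the full success condition to the single event $\max_m\|V^{(m)}\|_2<\sigma_{\min}$ already controlled above. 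This is the technically subtle step where a direct argument would lose a constant factor.
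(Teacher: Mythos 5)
Your overall strategy coincides with the paper's: write $a_m^*Y=\sum_{i\in S}\langle a_m,a_i\rangle\sigma_i\Phi^i$, isolate the diagonal term $\sigma_\ell\Phi^\ell$ for $\ell\in S$, bound the $\ell_2$-norm of the coefficient vector of the remaining sum by $\sigma_{\max}\mu_2(S)$ (this is exactly why $\mu_2(S)$ is defined with the two cases $\ell\in S$ and $\ell\notin S$), apply Theorem~\ref{thm:Bernstein} (resp.\ Corollary~\ref{cor:Bernstein}), and take a union bound over all $N$ indices. The paper resolves the comparison between $\min_{i\in S}\|a_i^*Y\|_2$ and $\max_{\ell\notin S}\|a_\ell^*Y\|_2$ by inserting an intermediate threshold $\rho=\sigma_{\min}/2$ and bounding the two one-sided tails $\P(\min_{i\in S}\|a_i^*Y\|_2<\rho)$ and $\P(\max_{\ell\notin S}\|a_\ell^*Y\|_2>\rho)$ separately -- which is precisely the ``literal triangle inequality'' route you describe, and it does cost the factor of $2$ you identified: with $\rho=\sigma_{\min}/2$ the Bernstein bound is applied at level $u=\sigma_{\min}/(2\sigma_{\max}\mu_2(S))$, so what the argument actually delivers is the stated probability with $\theta=2R\mu_2(S)$. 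That this is the intended reading is confirmed by Section~VI, where the theorem is invoked with $\theta=2R\mu_2(S)$ (see the random spherical and Dirac--Fourier discussions); the constant in the theorem statement appears to be off by this factor relative to its own proof. So the ``naive'' version of your argument is in fact the paper's argument.

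The one place you genuinely depart from the paper -- the attempted repair via the cross term $\Re\langle\Phi^\ell,V^{(\ell)}\rangle$ to recover the sharper constant $R\mu_2(S)$ -- does not work. For the real spherical model with $L=1$ the row $\Phi^\ell$ is uniform on $S^0=\{-1,+1\}$, so $\Re\langle\Phi^\ell,V^{(\ell)}\rangle=\pm\|V^{(\ell)}\|_2$ with probability $1/2$ each: there is no concentration at scale $\|V^{(\ell)}\|_2/\sqrt{L}$ to exploit, and the cross term is of full order, returning you to the triangle inequality. Even for large $L$, controlling the cross term requires an additional tail event for each $\ell\in S$, and completing the square only yields $\|a_\ell^*Y\|_2^2\geq\sigma_{\min}^2(1-t^2/L)$ on that event; the resulting degradation of the threshold and the extra failure probability cannot be absorbed into the clean bound $N\exp(-\tfrac{L}{2}(\theta^{-2}-\log\theta^{-2}-1))$ with $\theta=R\mu_2(S)$. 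The correct course is to drop this step, set $\rho=\sigma_{\min}/2$, and state the result with $\theta=2R\mu_2(S)$, which is what the paper proves and uses.
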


The probability bound of Theorem~\ref{thm:thresh} is similar to
that of Theorem~\ref{thm_average2}. However, in contrast to our
results for $\ell_{2,1}$-minimization, success of thresholding
suffers a dependency on the diagonal matrix $\Sigma$. The larger
the ratio $R$, the stronger the condition \eqref{cond:thresh} on
the maximal allowed sparsity $k$, and the larger the probability
of error.

\begin{proof} We proceed similarly as in \cite{KV07}. We denote by $\Theta$ the event that
$2$-thresholding fails. Clearly,
\begin{align}
&\hspace*{-0.2in} \P(\Theta) = \P(\min_{i \in S} \|a_i^* Y \|_2 <
\max_{\ell \notin
S} \|a_\ell^* Y\|_2 ) \notag\\
&\hspace*{0.1in}\leq \P(\min_{i \in S} \|a_i^* Y\|_2 < \rho ) +
\P(\max_{\ell \notin S} \|a_\ell^* Y\|_2 > \rho),\notag
\end{align}
where $\rho$ will be specified later. Denote by $Z_j$, $j \in S$, a
sequence of independent random vectors which are uniformly
distributed on the unit sphere of $\R^L$. Then,
\begin{equation}
\label{eq:term1} \P(\min_{i \in S} \|a_i^* Y\|_2 < \rho )=
\P\bl\min_{i \in S} \left\| \sum_{j \in S} a_i^* a_j \sigma_j
Z_j^*\right\|_2 < \rho\br.
\end{equation}
Now,
\begin{align}
&\hspace*{-0.1in}\left\|\sum_{j \in S}a_i^* a_j \sigma_j
Z_j^*\right\|_2 = \left\| \sigma_i Z_i^* + \sum_{j \in S, j\neq i}
a_i^* a_j
\sigma_j Z_j^*\right\|_2 \notag\\
& \hspace*{0.2in} \geq |\sigma_{\min}| - \left\| \sum_{j \in S,
j\neq i} \sigma_j \langle a_i, a_j\rangle Z_j^*\right\|_2.\notag
\end{align}
Substituting into (\ref{eq:term1}),
\begin{align}
&\P(\min_{i \in S} \|a_i^* Y\|_2 <\rho ) \notag\\
&\hspace*{0.2in}\leq \sum_{i \in S} \P\bl\left\|\sum_{j \in S,
j\neq i} \sigma_j a_i^* a_j Z_j^* \right\|_2 \geq \sigma_{\min} -
\rho \br.\notag
\end{align}
Choosing $\rho = \sigma_{\min}/2$ and applying Theorem
\ref{thm:Bernstein} we obtain
\begin{align}
&\P(\min_{i \in S} \|a_i^* Y\|_2 < \rho )\notag\\
& \hspace*{0.2in}\leq  k \exp(-L/2(\theta^{-2} - \log(\theta^{-2})
- 1))\notag
\end{align}
where we used the definition of $\theta$ and $\mu_2(S)$. 
Similarly we estimate
\begin{align}
&\P(\max_{\ell \notin S} \|a_\ell^* Y\|_2 > \sigma_{\min}/2)\notag\\
&\hspace*{0.2in}\leq (N-k) \exp(-L/2(\theta^{-2} -
\log(\theta^{-2}) -1)).\notag
\end{align}
Combining the two estimates completes the proof for the real case.
Choosing the vectors $Z_j$, $j\in S$, from the complex unit sphere
$S_\C^L$ and using Corollary \ref{cor:Bernstein} yields the
statement for the complex case.
\end{proof}


We now state the corresponding result for $2$-SOMP, which slightly improves the one in \cite{grrascva07} for the noiseless case. (Note that we restrict to $p=2$ here, although the theorem
is easily extended to general values of $p$.)

\begin{Theorem}\label{thm:OMP}
Let $A$ be a matrix with unit norm columns and constants
$\delta(S), \mu_2(S) < 1$ where $S \subset
\{1,\hdots, N \}$. Assume that
\begin{equation}\label{bound:SOMP}
\frac{\mu_2(S)^2 + (1+\epsilon)(1-\epsilon)^{-1} \mu_2(S)}{1-\delta(S)} \leq 1
\end{equation}
for some $\epsilon \in (0,1)$. Let $X$ be a random coefficient matrix
with support $S$ that is selected according to the real
Gaussian probability model, see \eqref{prob:model},
and let $Y = AX$. Then
$2$-SOMP applied to $Y$ recovers $X$ in $k$ steps with probability
at least
\begin{equation}\label{prob:boundOMP}
1 - N 2^k \exp(-\epsilon^2 A_L^2), 
\end{equation}
where $A_L \sim \sqrt{L}$ is given by \eqref{def:AL}.

If we use the complex Gaussian model instead of the real Gaussian model
then the same conclusion holds with $L$ replaced by
$2L$ in \eqref{prob:boundOMP}.
\end{Theorem}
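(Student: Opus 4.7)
I would follow the ``exact recovery condition'' analysis of OMP due to Tropp, extended to the multichannel setting and combined with Gaussian concentration as in the proof of Theorem~\ref{thm_average}. The argument proceeds by induction on the iteration count: if after $m$ steps the selected set $S_m$ lies inside $S$, then because $Y = A_S X^S$ and $P_{S_m} A_{S_m} = A_{S_m}$, the residual equals $Y_m = (I - P_{S_m}) A_T X^T$ with $T := S \setminus S_m$, and 2-SOMP picks a correct atom at step $m+1$ precisely when
\[
\max_{i \in T}\|a_i^* Y_m\|_2 > \max_{\ell \notin S} \|a_\ell^* Y_m\|_2.
\]

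Introducing $w_\ell^{(m)} := A_T^*(I-P_{S_m}) a_\ell$ one obtains $\|a_\ell^* Y_m\|_2 = \|(X^T)^* w_\ell^{(m)}\|_2$. Since the rows of $X^T$ are (up to the diagonal scaling $\Sigma_T$) independent standard Gaussian vectors, the map $X^T \mapsto \|(X^T)^* w\|_2$ is $\|w\|_2$-Lipschitz with expectation $A_L \|w\|_2$, and Theorem~\ref{thm_conc} yields
\[
\P\bl\bigl|\|(X^T)^* w\|_2 - A_L\|w\|_2\bigr| \geq \epsilon A_L \|w\|_2\br \leq 2\exp\bl-\epsilon^2 A_L^2/2\br.
\]
On the event where the two-sided concentration holds for every relevant $\ell$ and every intermediate $S_m$, success at step $m+1$ reduces to the deterministic inequality
\[
(1+\epsilon)\max_{\ell \notin S}\|w_\ell^{(m)}\|_2 < (1-\epsilon) \max_{i \in T} \|w_i^{(m)}\|_2.
\]

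Next I would derive this deterministic inequality from hypothesis \eqref{bound:SOMP}. Using $\|(A_{S_m}^* A_{S_m})^{-1}\|_2 \leq (1-\delta(S))^{-1}$, the off-diagonal block bound $\|A_T^* A_{S_m}\|_2 \leq \delta(S)$, and $\|A_{S_m}^* a_\ell\|_2 \leq \mu_2(S)$, expanding the projection $I-P_{S_m}$ in $w_\ell^{(m)}$ gives
\[
\|w_\ell^{(m)}\|_2 \leq \frac{\mu_2(S)}{1-\delta(S)}, \quad \ell \notin S.
\]
Taking the $i$th diagonal entry $V_{ii} = 1 - \|P_{S_m} a_i\|_2^2$ of the Schur complement $V = A_T^*(I-P_{S_m})A_T$ as a lower bound on $\|w_i^{(m)}\|_2$ yields
\[
\max_{i \in T}\|w_i^{(m)}\|_2 \geq 1 - \frac{\mu_2(S)^2}{1-\delta(S)}.
\]
The hypothesis \eqref{bound:SOMP} is calibrated so that $(1+\epsilon)\mu_2(S)/(1-\delta(S)) \leq (1-\epsilon)\bigl(1-\mu_2(S)^2/(1-\delta(S))\bigr)$, which is exactly what is needed.

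Finally, to obtain \eqref{prob:boundOMP} I would apply a union bound over all $\ell \in \{1,\hdots,N\}$ and over all reachable intermediate subsets $S_m \subseteq S$, whose total number is at most $\sum_{m=0}^{k-1}\binom{k}{m} < 2^k$; the complex case follows from Corollary~\ref{cor:Bernstein}, which doubles the exponent rate. \emph{The main obstacle} is the deterministic separation above: producing the upper bound on $\|w_\ell^{(m)}\|_2$ for $\ell \notin S$ and the matching lower bound on $\max_{i\in T}\|w_i^{(m)}\|_2$ using only the global parameters $\delta(S)$ and $\mu_2(S)$, and in a way that stays uniform over all intermediate $S_m$ (this is where the Schur-complement structure of $V$ is essential). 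The diagonal scaling $\Sigma$ is a secondary issue, but since the same $\Sigma_T$ multiplies both sides of the selection inequality it leaves the ratio unaffected.
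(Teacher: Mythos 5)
Your proposal follows essentially the same route as the paper's proof in Appendix~\ref{sec:somp}: induction over the correctly-selected subsets $J\subset S$, Gaussian concentration of $\|a_\ell^* Q_J A_S\Sigma\Phi\|_2$ around its mean $A_L\|a_\ell^* Q_J A_S\Sigma\|_2$ via Theorem~\ref{thm_conc}, the same two deterministic bounds $\mu_2(S)/(1-\delta(S))$ for $\ell\notin S$ and $1-\mu_2(S)^2/(1-\delta(S))$ for the correct atoms (with the observation that the factor $\max_{i}\sigma_i$ appears on both sides and cancels, so no ratio $R$ enters), and a union bound over all $N$ atoms and all $2^k$ subsets. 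The only differences are cosmetic: your separation inequality $(1+\epsilon)\max_{\ell\notin S}\|w^{(m)}_\ell\|_2<(1-\epsilon)\max_{i\in T}\|w^{(m)}_i\|_2$ places the $(1\pm\epsilon)$ factors in the orientation that actually matches \eqref{bound:SOMP} (the paper's displayed concentration inequalities have them swapped, apparently a typo), your two-sided Lipschitz concentration gives $\exp(-\epsilon^2A_L^2/2)$ per event rather than the $\exp(-\epsilon^2A_L^2)$ the paper imports from \cite{grrascva07}, and the complex case is handled by Theorem~\ref{thm_conc} in dimension $2L$ rather than by Corollary~\ref{cor:Bernstein}, though the doubling-of-dimension idea is the same.
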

\begin{proof}
See Appendix~\ref{sec:somp}.
\end{proof}
\begin{remark}\label{rem:OMP}
\begin{itemize}
\item[(a)] Due to the factor $2^k$ the probability bound
(\ref{prob:boundOMP}) becomes effective only when the number of
channels becomes comparable to the sparsity $k$. This drawback is
very likely due to the analysis and is not observed in practice.
However, it seems to be very difficult to remove this factor by a
more sophisticated proof technique.
\item[(b)] We require $\epsilon<1$, so that the probability
decay of \eqref{prob:boundOMP} is potentially slower than that
given by Theorem~\ref{thm_average2}.
\item[(c)] With $\delta = \epsilon = 1/2$ condition (\ref{bound:SOMP}) is satisfied if
$\mu_2(\Lambda) \leq 1/7$ while the probability estimate
\eqref{prob:boundOMP} behaves like $1- N 2^k \exp(-L/4)$.
\item[(d)] With the estimates $\delta(S) \leq \delta^*(S)$ and
$\mu_2(S) \leq \delta^*(S)$,
\eqref{bound:SOMP} with $\epsilon = 3/11$ is implied by
\[
\delta^*(S) < 1/3.
\]
\item[(e)] By Proposition \ref{lem_RIP} the condition $\delta^*(S) < 1/3$
 implies
$\|A_S^\dagger a_\ell\|_2 \leq 1/2$ for all $\ell \notin S$, i.e.,
the bounded norm condition ( \ref{cond:boundnorm}) of the average
case recovery result for mixed $\ell_{2,1}$. In other words, the
condition in (d) for SOMP is slightly stronger than the one for
$\ell_{2,1}$.
\end{itemize}
\end{remark}

\subsection{Comparison}

We now compare the average-case recovery conditions for mixed
$\ell_{2,1}$, thresholding and SOMP for the following choices of
the matrix $A$ which we will also use in the numerical
experiments:
\begin{enumerate}
\item Random spherical ensemble;
 \item Union of Dirac and Fourier;
 \item Time-Frequency shifts of the Alltop window.
\end{enumerate}

\subsubsection{Random spherical ensemble}

Assume that the random columns of
$A \in \R^{n \times N}$ are independent and uniformly distributed
on the sphere $S^{n-1}$. Let $S$ be a support set of size $k$.
Then according to Proposition \ref{lem_RIP}
 the condition
$\|A_S^\dagger a_\ell\|_2 \leq \alpha<1$ of Theorem
\ref{thm_average2} is implied by $\delta^*(S) \leq
\frac{\alpha}{1+\alpha} < 1/2$, while by Proposition
\ref{prop:spherical} the latter holds with probability at least
$1-\epsilon$ provided
\begin{equation}\label{nk:l1l2}
n \geq \max\{ C_1(\alpha) k, C_2(\alpha) \log(N/\epsilon)\}.
\end{equation}
Assuming, for example, $\alpha=1/4$, under the probability model
\eqref{prob:model}, the probability that reconstruction by
$\ell_{2,1}$ fails is bounded from above by $N \exp(-L/2(15 -
\log(16))) + \epsilon = N \exp( - c L) + \epsilon$ with $c \approx
6.1137$.

We now compare this result with the condition of Theorem
\ref{thm:thresh} concerning thresholding. As noted in
\eqref{mudelta:est}, $\mu_2(S) \leq \delta^*(S)$. Therefore, by
Proposition \ref{prop:spherical} we have
\[
\theta = 2 R \mu_2(S) \leq 2R \delta^*(S) < 1
\]
with probability at least $1-\epsilon$ provided
\begin{equation}\label{n:thresh}
n \geq C \frac{R^2}{\theta^2} \max\left\{k \log(R/\theta),
\log(N/\epsilon) \right\}
\end{equation}
and the failure probability of thresholding is bounded by
$N\exp( - L/2 (\theta^{-2} - \log(\theta^{-2})-1)) + \epsilon$.

Let us finally consider Theorem \ref{thm:OMP} for SOMP. By Proposition \ref{prop:spherical} the condition
$\delta^*(S) < 1/3$ in Remark \ref{rem:OMP} is satisfied with probability at least
$1-\epsilon$ provided
\begin{equation}\label{n:OMP}
n \geq \max\left\{C_1 k, C_2 \log(N/\epsilon)\right\}
\end{equation}
and the failure probability of SOMP is bounded by
\begin{equation}\label{prob:est:SOMP}
N2^k \exp(-9/121\, A_L^2) + \epsilon 
\end{equation}
with $A_L^2 \sim L$ if the real Gaussian probability model is used.

Conditions \eqref{nk:l1l2}, \eqref{n:thresh}, \eqref{n:OMP} for
$\ell_{2,1}$, thresholding and SOMP are rather similar. However,
condition \eqref{n:thresh} for thresholding
involves the ratio $R$. 
If $R$ is large then thresholding behaves much worse compared to
$\ell_{2,1}$ and SOMP. The probability estimate
\eqref{prob:est:SOMP} is the worst compared to the other two
algorithms due to the factor $2^k$. Therefore, $\ell_{2,1}$ gives the best known
theoretical average case result.


\subsubsection{Union of Dirac and Fourier}

Consider the $n \times 2n$ matrix $A = (I | F)$, where $I$ is the $n\times n$
identity matrix and $F$ is the normalized $n \times n$ Fourier
matrix.
The coherence of $A$ is easily
seen to be $\mu = 1/\sqrt{n}$. By Proposition \ref{prop:bound:mu}
condition \eqref{cond:boundnorm}, $\|A_S^\dagger a_\ell\|_2 \leq
\alpha$ with $\alpha = 1/2$ is satisfied for all support sets $S$
of cardinality at most $k$ provided
$$
\sqrt{\frac{k}{n}} + \frac{k-1}{2\sqrt{n}} < \frac{1}{2}.
$$
If $S$ is chosen at random then a much better bound (up to
constants) is obtained using Theorem~\ref{cor_cond2est}. In our
special case, however, further improvement is possible. A
reformulation of a result of \cite{caro06-1}, see also
\cite[Proposition 3]{tr06-2} shows the following. If the support
$S$ consists of $k_1$ arbitrary elements of $\{1,\hdots,n\}$ and
$k_2$ random elements of $\{n+1,\hdots,2n\}$ then with probability
at least $1-\epsilon$ we have $\delta(S) \leq 1/2$
provided
\begin{equation}\label{cond:DiracFourier}
k = k_1 + k_2 \leq \frac{c n}{\sqrt{\log\left(\frac{\epsilon}{Cn}\right) + \log(n)}},
\end{equation}
with $c = 0.25$. In particular $k \leq n/4$ and the same reasoning
as in the proof of Theorem~\ref{cor_cond2est} yields
\[
\|A_S^\dagger a_\ell\|_2 \leq \alpha = 1/2.
\]
Using one of the complex probability models in
Theorem \ref{thm_average2}, the failure
probability of $\ell_{2,1}$-minimization is bounded by
$N\exp(-L(4-\log(4)-1)) = N \exp(-cL)$ with $c \approx 1.61$.

To compute the performance of thresholding, note that condition
\eqref{cond:thresh}, $2 R \mu_2(S) \leq 2 R \mu \sqrt{k} \leq
\theta < 1, $ is satisfied provided
\begin{equation}\label{n:thresh-f}
n \geq \frac{4 R^2}{\theta^2}k.
\end{equation}
Assuming that the non-zero rows of the matrix $\Phi$ in the
probability model \eqref{prob:model} on the coefficients are
independent and uniformly distributed on the complex unit sphere
$S_\C^{L-1}$, the failure probability of thresholding is bounded
by $N \exp(-L(\theta^{-2} - \log(\theta^{-2}) - 1))$.

Assuming $\delta(S) \leq \delta = 1/2$ and $\mu \sqrt{k} \leq 1/7$, i.e.,
\begin{equation}\label{n:OMP-f}
n \geq 49\, k,
\end{equation}
the condition of Remark \ref{rem:OMP}(c) is satisfied since by
(\ref{mudelta:est}), $\mu_2(S) \leq \mu \sqrt{k} \leq 1/7$.
Then by Theorem \ref{thm:OMP} SOMP fails with probability at most
$N 2^k\exp(-A_{2L}^2/4)$ 
assuming the complex Gaussian
probability model. Assuming as in the discussion of $\ell_{2,1}$
that the support set is such that $k_1$ arbitrary elements of
$\{1,\hdots,n\}$ and $k_2$ random elements of $\{n+1,\hdots,2n\}$
are chosen with $k=k_1 + k_2$ then the assumed condition
$\delta(S) \leq 1/2$ is true with probability at least
$1-\epsilon$ provided \eqref{cond:DiracFourier} holds.

Similar conclusions on the comparison of the three algorithms as
in the previous example apply. We note, however, that
in contrast to $\ell_{2,1}$ and SOMP, the performance bound for thresholding
does not require a probability model on the support set $S$.




\subsubsection{Time-Frequency shifts of Alltop window}

Let $n \geq 5$ be a
prime. Denote by $(T_r g)_\ell = g_{\ell - r \mod n}$ and $(M_s
g)_\ell = e^{2\pi i s\ell/n} g_\ell$ the cyclic shift and
modulation operator, respectively. Then $T_r M_s$, $r,s =
0,\hdots,n-1$ forms the set of time-frequency shifts. Let $g_\ell
= \frac{1}{\sqrt{n}}e^{2\pi i \ell^3/n}$ be the so-called Alltop
window. Then define $A$ to be the $n \times n^2$ matrix with
columns being the time-frequency shifts $T_r M_s g$, $r,s =
0,\hdots, n-1$. The coherence of $A$ is $\mu = 1/\sqrt{n}$
\cite{hest03}.

As in the Fourier-Dirac case, under condition \eqref{n:thresh-f}
and the complex probability model of Theorem \ref{thm:thresh},
thresholding fails with probability at most $N \exp(-L(\theta^{-2}
- \log(\theta^{-2}) - 1))$.

For the analysis of $\ell_{2,1}$ and SOMP we assume that the
support $S$ is chosen
uniformly at random. 
As $A$ is the union of $n$ orthonormal bases we have $\|A\|_2^2 = n$.
Then choosing $\delta = 3/4$ in Theorem \ref{cor_cond2est} yields that under the condition
\[
n \geq C k \log(\epsilon^{-1})
\]
with a constant $C$ (which also implies \eqref{cond_normA}) we
have
\[
\|A_S^\dagger a_\ell\|_2 \leq 3\sqrt{c} \log^{-1/2}(\epsilon^{-1}) \leq \alpha \quad \mbox{ for all } \ell \notin S
\]
with probability at least $1-\epsilon$ where $\alpha = 3 \sqrt{c}
\approx 0.0773$. By Theorem \ref{thm_average2}, using one of the complex probability models,  the
failure probability of $\ell_{2,1}$ is then bounded by $N
\exp(-c_2 L) + \epsilon$ with $c_2 = \alpha^{-2} -
\log(\alpha^{-2}) - 1$.

For the analysis of SOMP we choose $\delta = 1/2$ in Theorem
\ref{thm_Sropp}. Assuming that the square-root in \eqref{condmu3}
is less than $\frac{9}{10} e^{-1/4} \frac{1}{2}$ is equivalent to
\begin{equation}\label{n:OMP:Alltop}
n \geq C k \log(\epsilon^{-1})
\end{equation}
with an appropriate $C$, and condition \eqref{condmu3} is
satisfied. Then with probability at least $1-\epsilon$ we have
$\delta^*(S) \leq 1/2$. Furthermore, as suggested by Remark
\ref{rem:OMP}(b) the condition $\mu_2(S) \leq 1/12$ is also
implied by \eqref{n:OMP:Alltop} since $\mu_2(S) \leq \sqrt{k} \mu
=\sqrt{\frac{k}{n}}$. Assuming the complex Gaussian probability model on
the non-zero coefficients of $X$ the failure probability of SOMP
is bounded by $N 2^k \exp(-A_{2L}^2/2) + \epsilon$ due to Theorem
\ref{thm:OMP}.


\section{Numerical Simulations}
\label{sec:sim}

We tested the three algorithms $\ell_{2,1}$ minimization,
thresholding and SOMP using the three different types of matrices
indicated in the previous section. The support set $S$ of the
sparse coefficient matrices $X$ was always selected uniformly at
random while the non-zero coefficients were selected at random
using one of the following choices of the probability model
\eqref{prob:model}, $X^S = \Sigma \Phi$:
\begin{enumerate}
\item $\Phi$ is chosen to be a real Gaussian random matrix (i.e.,
all entries independent and standard normally distributed);
$\Sigma$ has independent diagonal entries with standard normal
distribution. \item $\Phi$ is chosen to be a complex  Gaussian
random matrix (i.e., the real and imaginary parts of each entry
are chosen independently according to a standard normal
distribution); $\Sigma$ is equal to the identity.
\end{enumerate}
Note that $\Sigma = I$ is favorable for thresholding, while the
choice of $\Sigma$ should have no influence on the performance of
$\ell_{2,1}$ and only a mild influence on SOMP.


In the following figures the results of various simulation runs
are plotted (we always used $100$ simulations for each choice of
parameters).

In Fig.~\ref{fig:use} we plot the results when choosing $A$ from a
random spherical ensemble of size $n=32$ columns and $N=256$ rows
for $L=1,2,4$. The matrix $X$ was generated according to model
(1). The improvement with increasing $L$ is clearly evident.
\begin{figure}[h]
\subfigure[]{%
\centerline{
\resizebox{!}{2in}{\includegraphics{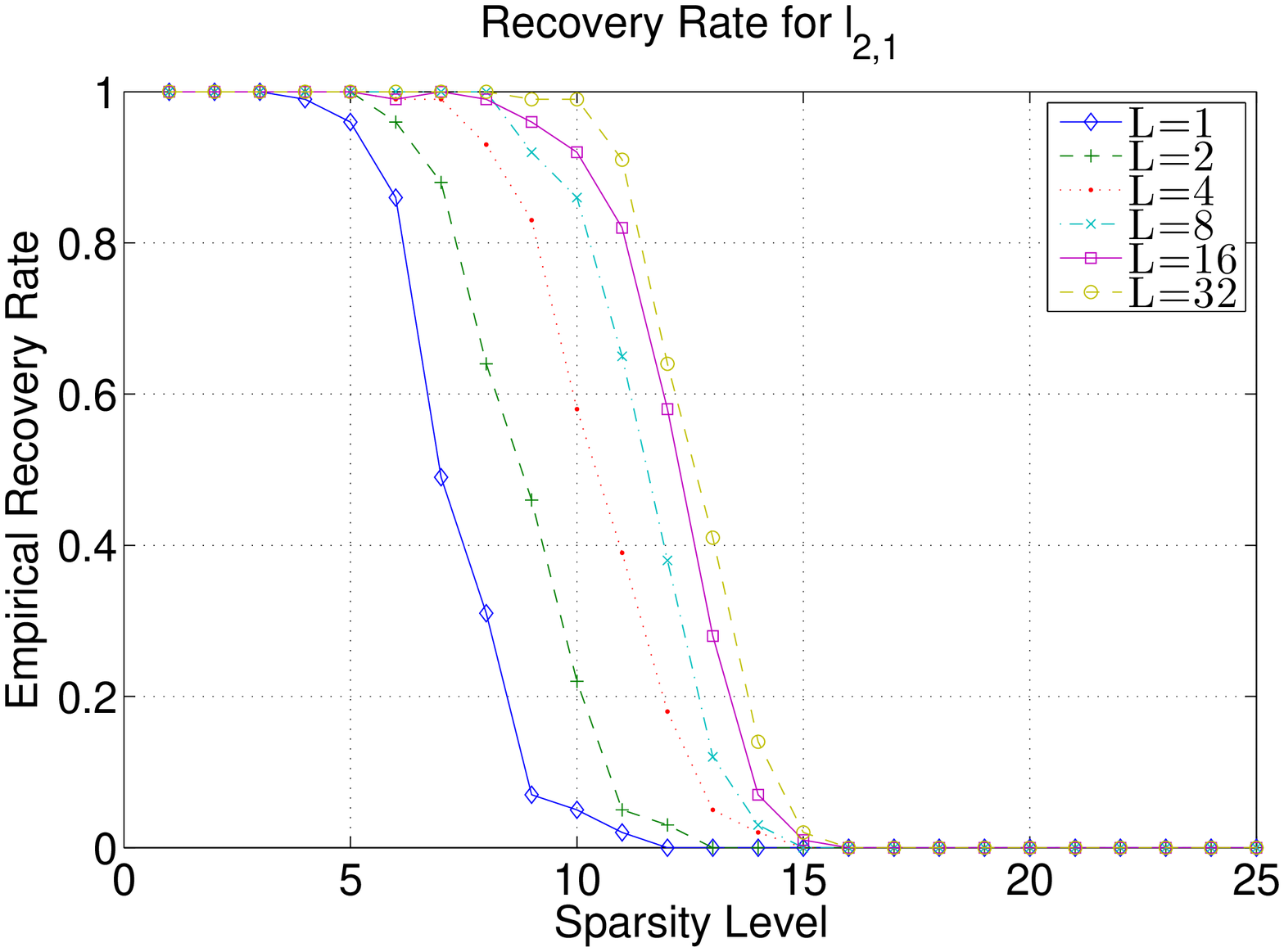}} }} \\
\subfigure[]{%
\centerline{
\resizebox{!}{2 in}{\includegraphics{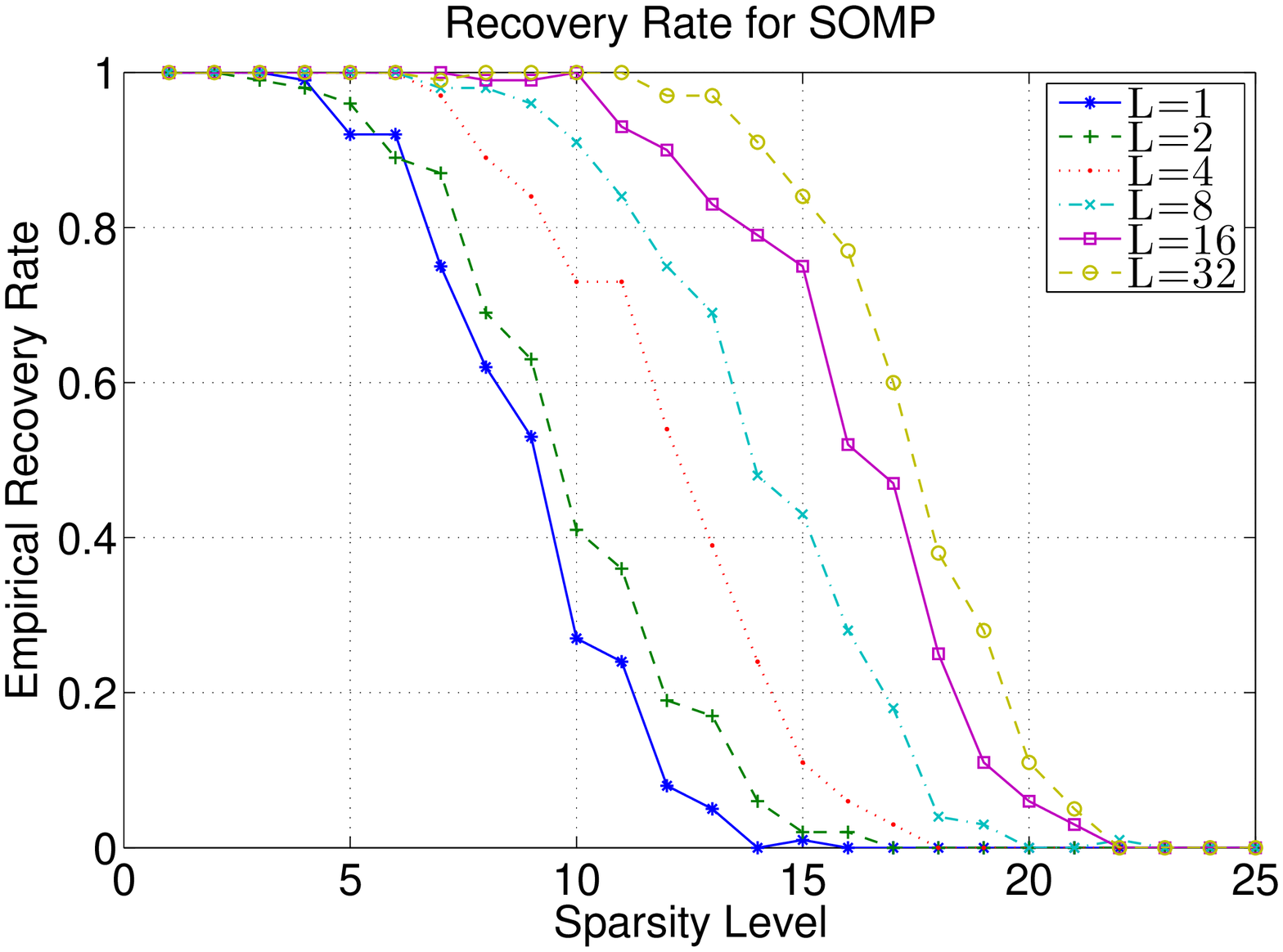}}}}\\ 
\subfigure[]{%
\centerline{
\resizebox{!}{2in}{\includegraphics{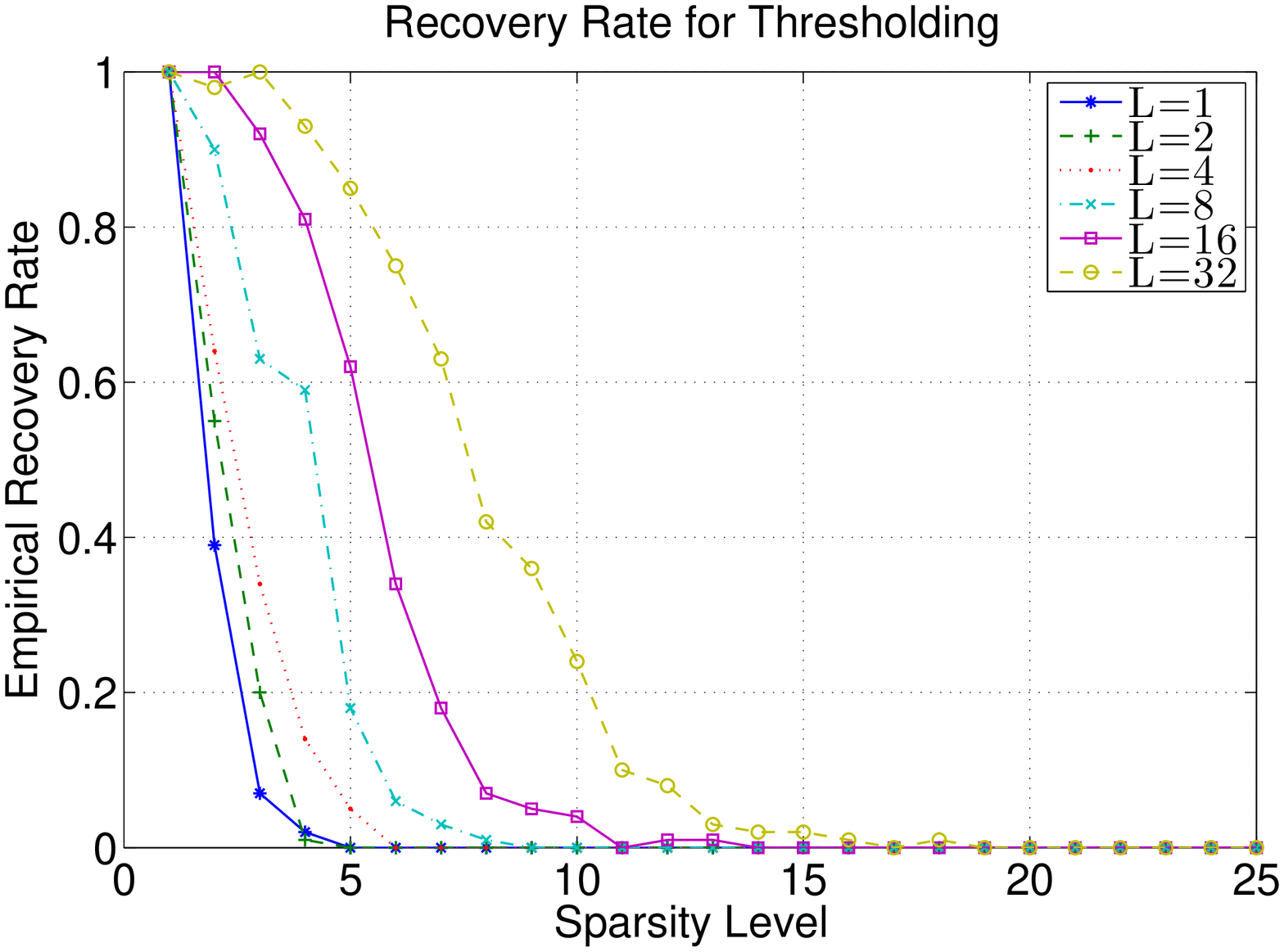}}} } \\
\caption{Multichannel recovery with $X$ generated according to
model (1) and $A$ chosen from a random spherical ensemble, (a)
$\ell_{2,1}$, (b) SOMP, (c) Thresholding.} \label{fig:use}
\end{figure}


In Fig.~\ref{fig:fd} we consider all three methods when $A$ is a
union of Dirac and Fourier bases, each with $32$ elements.
Therefore, $n=32$ and $N=64$. The matrix $X$ was generated
according to model (2). In this setting the performance using
thresholding is reasonable, though still worse than $\ell_{2,1}$
and SOMP.
\begin{figure}[h]
\subfigure[]{%
\centerline{
\resizebox{!}{2in}{\includegraphics{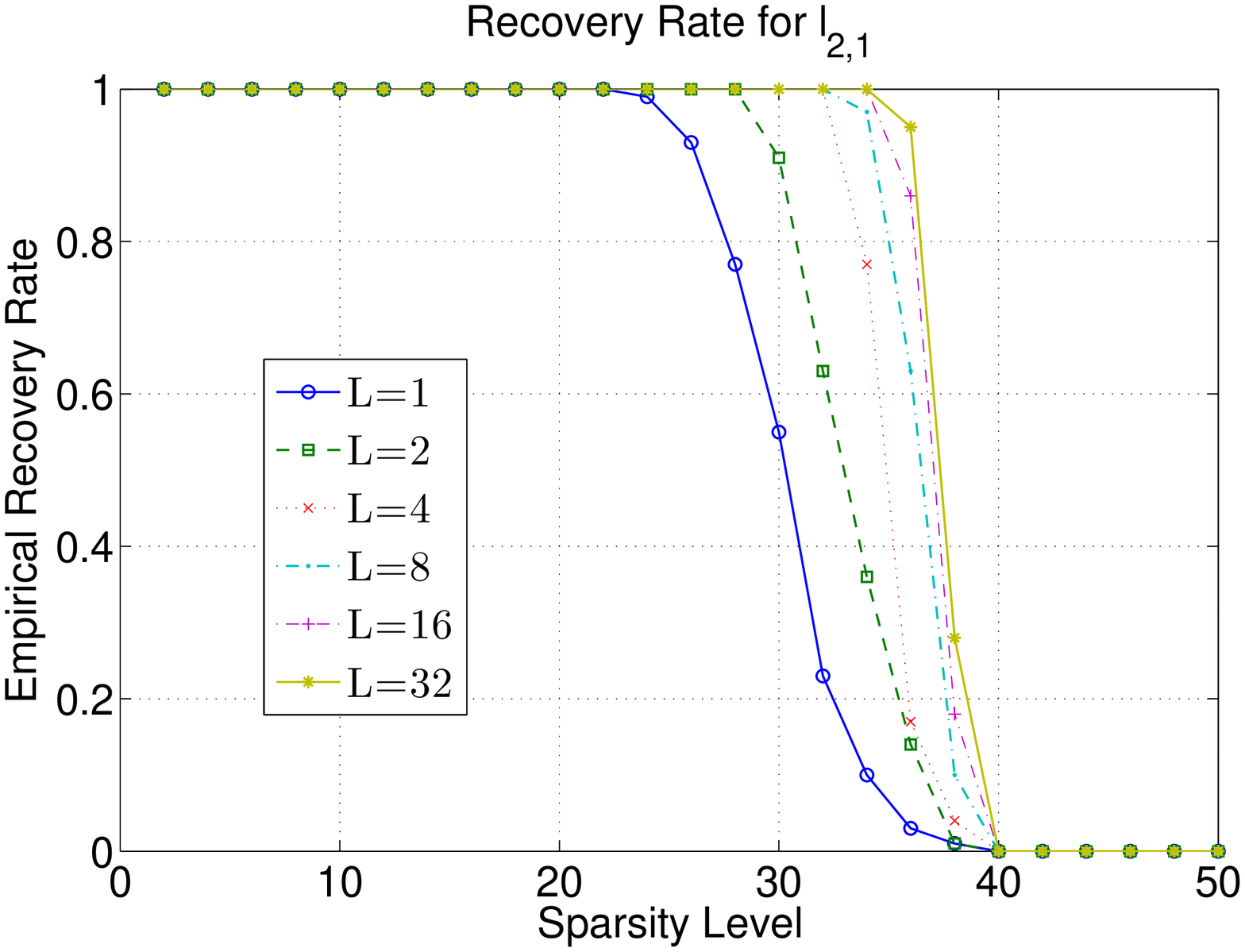}}} } \\
\subfigure[]{%
\centerline{
\resizebox{!}{2 in}{\includegraphics{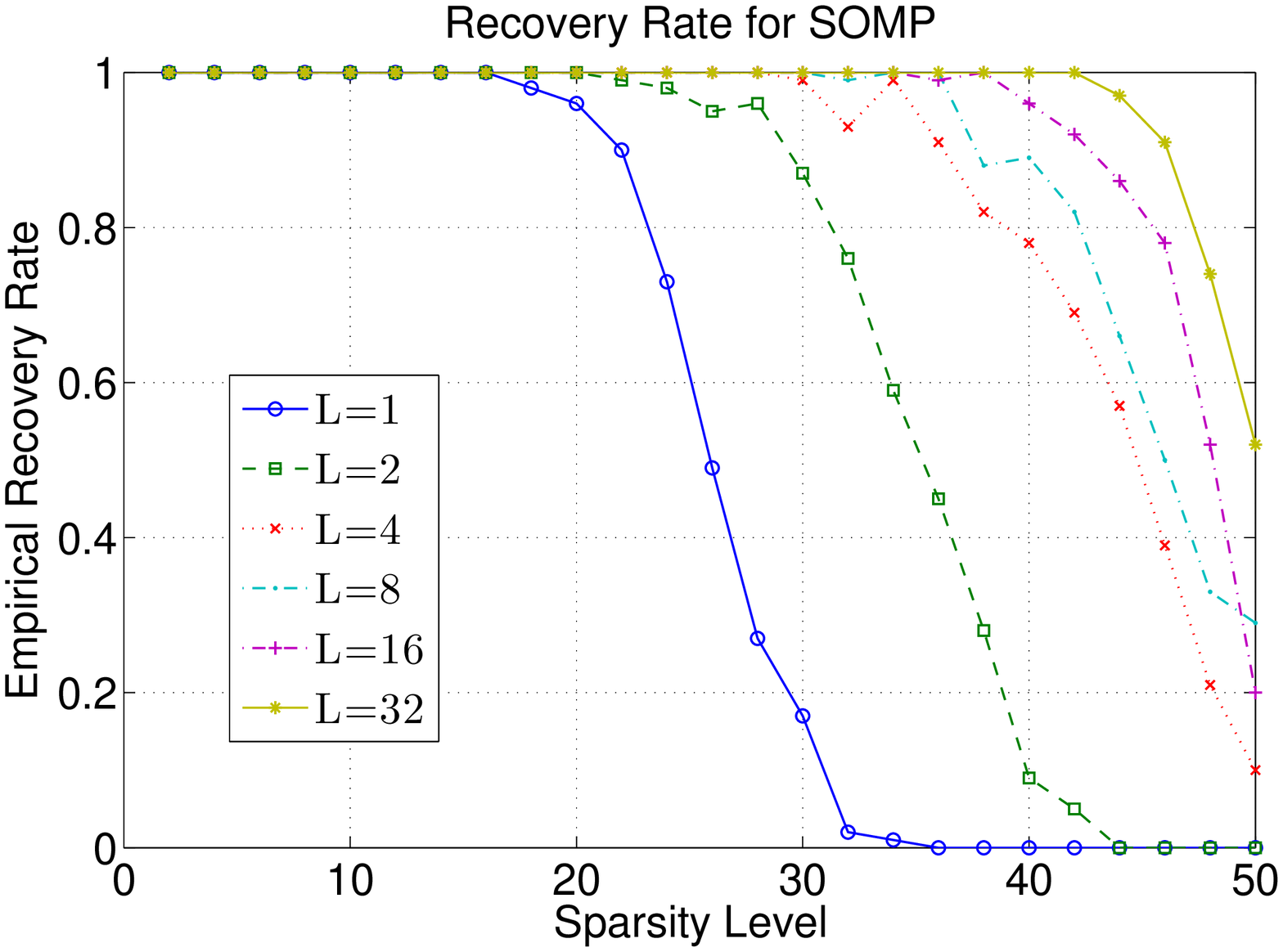}} }}\\
\subfigure[]{%
\centerline{ \resizebox{!}{2in}{\includegraphics{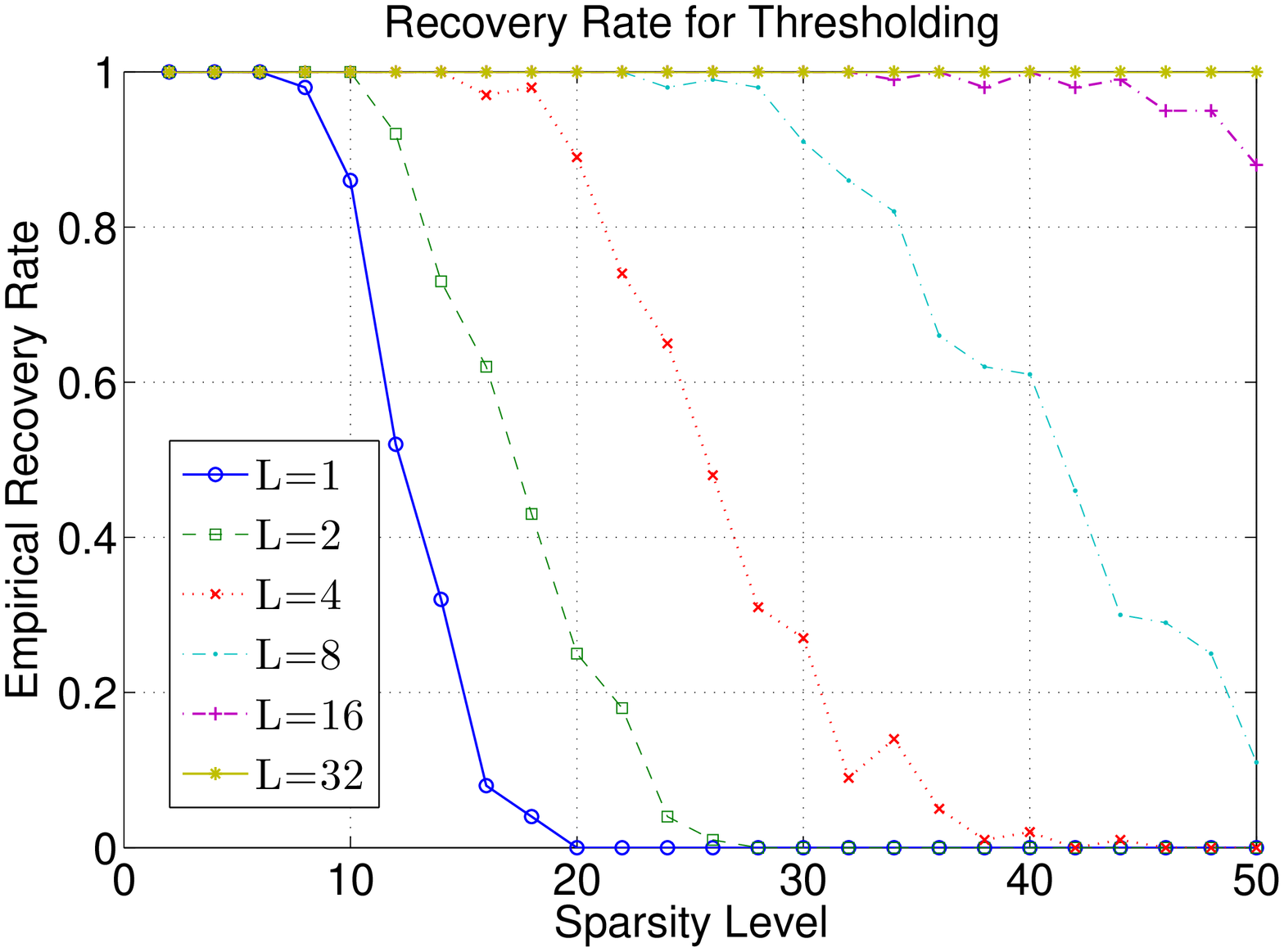}} }}
\caption{Multichannel recovery with $X$ generated according to
model (2) and $A$ a union of the Dirac and Fourier bases, (a)
$\ell_{2,1}$, (b) SOMP, (c) Thresholding.}\label{fig:fd}
\end{figure}


Finally, in Fig.~\ref{fig:gabor} we plot the results when using
time-frequency shifts of the Alltop window with $n=29$ and
$N=29^2=841$. Here the results of thresholding are extremely
poor and therefore not plotted.
\begin{figure}[h]
\subfigure[]{%
\centerline{
\resizebox{!}{2in}{\includegraphics{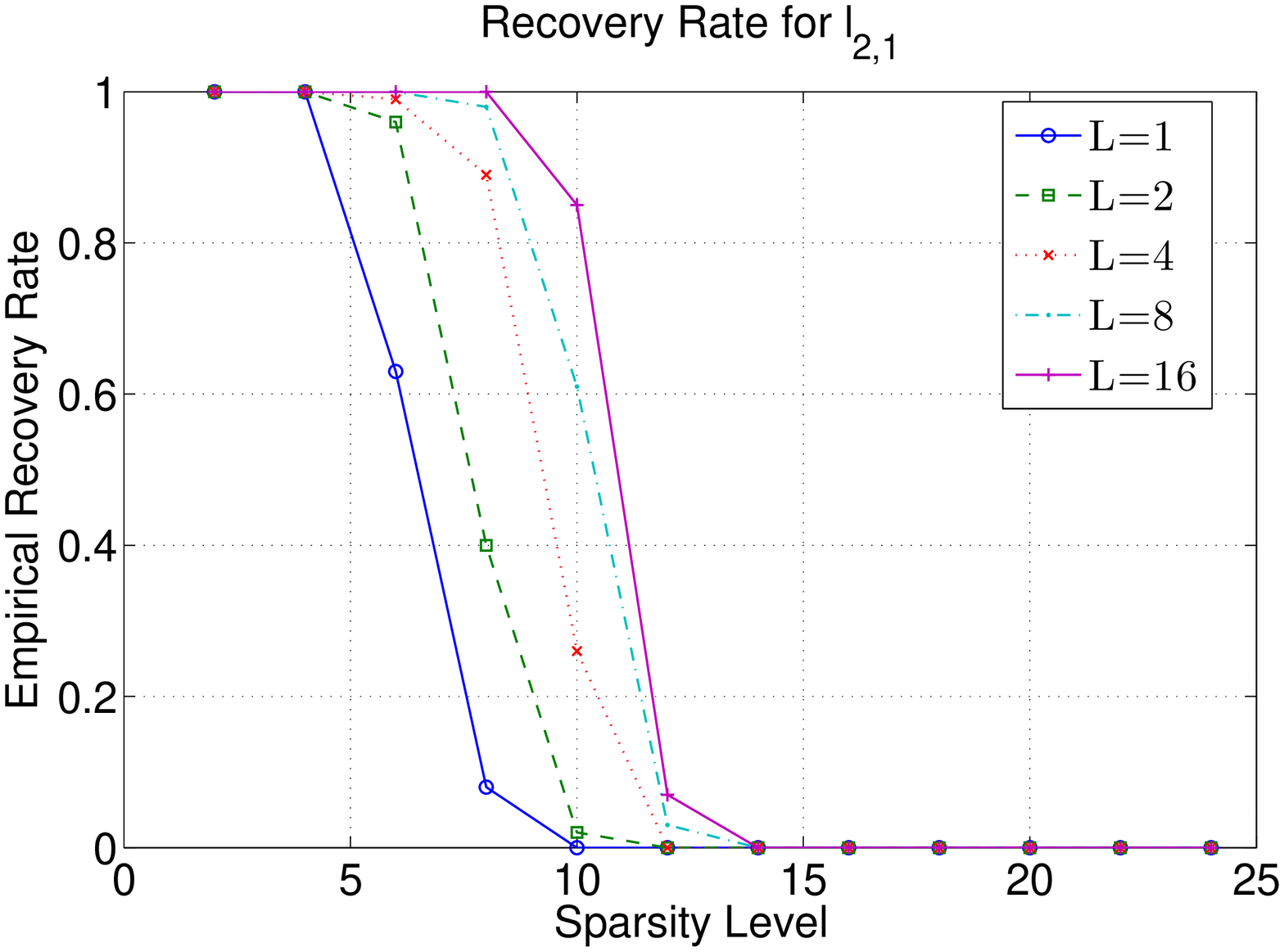}}} }\\
\subfigure[]{%
\centerline{ \resizebox{!}{2 in}{\includegraphics{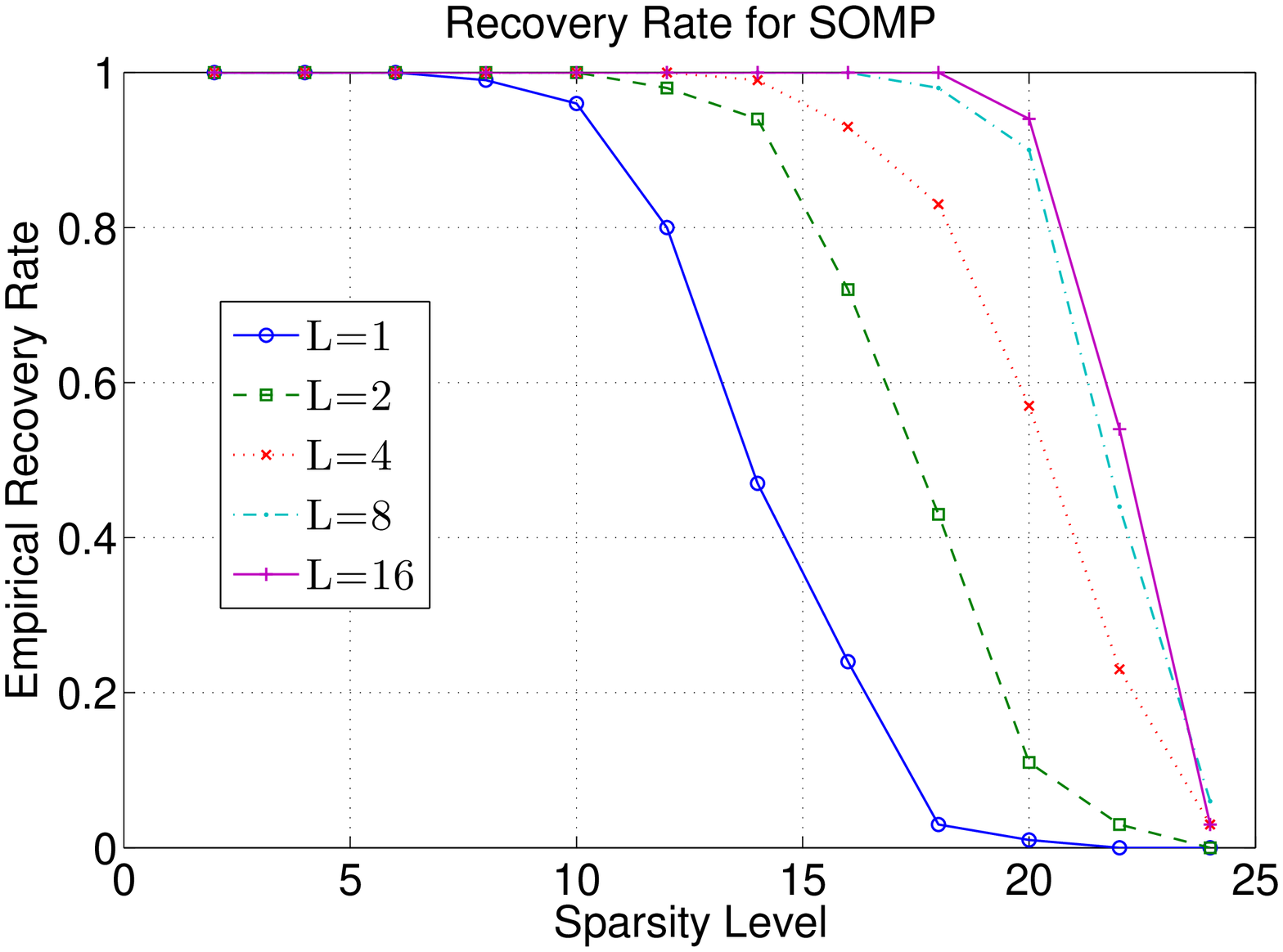}} } }
\caption{Multichannel recovery with $X$ generated according to
model (2) and $A$ chosen as time-frequency shifts of the Alltop
function (a) $\ell_{2,1}$ (b) SOMP.} \label{fig:gabor}
\end{figure}


In all three cases, SOMP performs better than the $\ell_{2,1}$
approach. However, both show clear performance advantage
with increasing $L$.

\section{Conclusion}
\label{sec:conc}

In this paper we analyzed the average-case performance of
$\ell_{2,1}$ recovery of multichannel signals. Our main result is
that under mild conditions on the sparsity and measurement matrix,
the probability of failure decays exponentially with the number of
channels. To develop this result we assumed a probability model on
the non-zero coefficients of a jointly sparse signal. The results
we obtained appear to be the best-known theoretical results on
multichannel recovery. Using the tools we developed for analyzing
the $\ell_{2,1}$ approach, we also improved slightly on previous
performance bounds for thresholding and SOMP.

 \appendices




\section{Proof of Theorem~\ref{thm:Bernstein}}
\label{sec:prob}

The proof uses the following extension of Khintchine's inequality
to higher dimensions stated in \cite{kwko01},
\[
\E \left\|\sum_{j=1}^k a_j Z_j \right\|_2^{p} \leq
\left(\frac{2}{L}\right)^{p/2}
\frac{\Gamma\left(\frac{L+p}{2}\right)}{\Gamma\left(\frac{L}{2}\right)}
\|a\|_2^p
\]
for all $p\geq 2$ and all vectors $a \in \R^k$. By splitting in real and imaginary parts
it easily follows that this inequality also holds for all $a\in \C^k$.
We may assume without loss of generality that
$\|a\|_2 = 1$. Then an application of Markov's inequality yields
\begin{align}
& \P\bl\left\|\sum_{j=1}^k a_j Z_j \right\|_2 \geq u\br \notag\\
&=
\P\left(\exp\bl\lambda L/2\left\|\sum_j a_j Z_j\right\|_2^2\br
\geq \exp(\lambda L u^2/2 )\right) \notag\\
& \leq \exp(- \lambda Lu^2 /2) \E\left[ \exp\bl\lambda L/2 \left\|\sum_j a_j Z_j\right\|_2^2\br\right] \notag\\
& = \exp(-\lambda L u^2/2) \sum_{i=0}^\infty (\lambda L/2)^i\E
\left\|\sum_{j=1}^k a_j Z_j\right\|_2^{2i}\notag\\
& \leq \exp(-\lambda L
u^2/2) \sum_{i=0}^\infty \lambda^i \frac{\Gamma(L/2+i)}{i!
\Gamma(L/2)}
\notag\\
& = \exp(-\lambda L u^2/2) \sum_{i=0}^\infty \frac{(L/2)_i}{i!}
\lambda^i \notag\\
&= \exp(-\lambda L u^2/2)
\frac{1}{(1-\lambda)^{L/2}},\label{prob1}
\end{align}
where $(a)_i = a(a+1)(a+2) \cdots (a+i-1)$ denotes the Pochhammer
symbol. The last equation is due to the fact that
$\sum_{i=0}^{\infty} \frac{(a)_i}{i!} \lambda^i$ is the Taylor
series of $(1-\lambda)^{-a}$, which converges for $\lambda < 1$.
Minimizing \eqref{prob1} with respect to $\lambda$ gives $\lambda
= 1-u^{-2}$. Inserting this value yields the statement of the
theorem.

\section{Proof of Proposition~\ref{prop:spherical} }
\label{sec:deltas}

Consider first the case of Gaussian or Bernoulli matrices.
According to Theorem 2.1 in \cite{rascva06} (see also Lemma 5.1 in
\cite{badadewa06}), we have $\|A_S^*A_S-I\|_{2} \geq \delta$ with
probability at most $2(1+12/\delta)^k \exp(-c_0/9 n \delta^2)$
with $c_0 = 7/18$. A similar estimate holds for $\|A_{S\cup\ell}^*
A_{S \cup \ell} - I\|_2$ with $\ell \notin S$. A union bound over
all $\ell \notin S$ yields $\delta^*(S) \geq \delta$ with
probability at most $2 N (1+12/\delta)^k \exp(-c_0/9 n \delta^2)$.
This term is less than $\epsilon$ if \eqref{delta:star} holds.

Now consider a random matrix $\Psi \in \R^{n \times N}$ with
independent columns that are uniformly distributed on the sphere
$S^{n-1}$. Then $\Psi$ has the same distribution as $D A$,
where $A$ is Gaussian matrix as above,
$D = \operatorname{diag}
(s_1^{-1},\hdots,s_N^{-1})$ and $s_j = \sqrt{n} \|\Phi_j\|_2$
where $\Phi_j \in \R^n$ is a vector of independent standard
normally-distributed random variables. We now use the following
measure concentration inequality \cite[Corollary (2.3)]{ba05} or
\cite[eq.\ (2.6)]{care08} for a standard Gaussian vector $Z \in
\R^n$,
\begin{align}
\P(\|Z\|_2^2 \geq \frac{n}{1-\gamma}) &\leq \exp (-\gamma^2 n/4),\notag\\
\quad \P(\|Z\|_2^2 \leq (1-\gamma)n) & \leq \exp(-\gamma^2 n /4).\notag
\end{align}
By a union bound this implies that
\begin{align}
&\P\left(1-\gamma \leq \min_{j=1,\hdots,N} s_j^2 \leq
\max_{j=1,\hdots,N} s_j^2 \leq
 \frac{1}{1-\gamma}\right)\notag\\
 &\hspace*{0.2in}\geq 1 - 2N \exp(-\gamma^2 n /4). \label{conc_ineq}
\end{align}
By the above reasoning, we have $(1-\delta/3)\|x\|_2^2 \leq
\|Ax\|_2 \leq (1+\delta/3)\|x\|_2^2$ for all $x$ with $\supp x
\subset S \cup \{\ell\}$ for some $\ell \notin S$ with probability
at least $1-\epsilon$ provided \eqref{delta:star} holds with a
suitable constant. If additionally $1-\gamma \leq
\min_{j=1,\hdots,N} s_j^2 \leq \max_{j=1,\hdots,N} s_j^2 \leq
\frac{1}{1-\gamma}$ for $\gamma = \delta/4$ then $(1-\delta)
\|x\|_2^2 \leq \|DAx\|_2^2 = \|\Psi x\|_2^2 \leq
(1+\delta)\|x\|_2^2$ for all $x$ with $\supp x \subset S \cup
\{\ell\}$ for some $\ell \notin S$. By a union bound and
\eqref{conc_ineq} this holds with probability at least
$1-2\epsilon$ provided \eqref{delta:star} holds and $2N
\exp(-\delta^2 n/64) \leq \epsilon$, the latter being equivalent
to $n \geq 64 \delta^2 \log(2N/\epsilon)$. Adjusting the constant
in \eqref{delta:star} completes the proof.

\section{Proof of Theorem~\ref{thm:OMP}}
 \label{sec:somp}

 We assume
that until a certain step SOMP has selected only correct indices,
collected in $J \subset S$. Let us first estimate the probability
that it selects a correct element of $S \setminus J$ also in the
next step.

We denote by $P_J = A_J A_J^\dagger$ the orthogonal projection
onto the span of the columns of $A$ in $J$, and $Q_J = I - P_J$.
The residual at the current iteration is given by $Y_M = Q_J Y =
Q_J A_S X = Q_J A_S \Sigma \Phi$. SOMP selects a correct index in
$S \setminus J$ in the next step if
\begin{equation}
\max_{\ell \in S \setminus J} \|a_\ell^* Q_J A_S \Sigma \Phi \|_2
> \max_{\ell \notin S} \|a_\ell^* Q_J A_S \Sigma \Phi\|_2.
\end{equation}
By Theorem 11 in \cite{grrascva07} (which is proven using Theorem
\ref{thm_conc}; note that there is a slight error in
\cite{grrascva07} in the computation of the constant $A_L$) we
have the following concentration of measure inequalities
\begin{align}
&\P\left( \max_{\ell \in S \setminus J} \|a_\ell^* Q_J A_S \Sigma \Phi\|_2
< (1+\epsilon) C_2(L) \times \right.\notag\\
& \phantom{\P\big(} \left.
\times \max_{\ell \in S \setminus J} \|a_\ell^* Q_J A_S \Sigma\|_2\right) \leq \exp(-\epsilon^2 A_L^2),\notag\\
&\P\left(\max_{\ell \notin S} \|a_\ell^* Q_J A_S \Sigma \Phi\|_2
 > (1-\epsilon) C_2(L) \times \right.\notag\\
& \phantom{\P\big(} \left.
\times  \max_{\ell \notin S} \|a_\ell^* Q_J A_S
\Sigma\|_2\right) \leq |S^c| \exp(-\epsilon^2 A_L^2),\notag
\end{align}
where $A_L$ is the constant in \eqref{def:AL} and $C_2(L) = \E \|
Z \|_2$ with $Z = (Z_1,\hdots, Z_L)$ being a vector of independent
standard normal variables. Now we assume that
\begin{align}
&(1+\epsilon) C_2(L)\max_{\ell \in S \setminus J} \|a_\ell^* Q_J
A_S \Sigma\|_2 \notag\\
&\geq  (1-\epsilon) C_2(L) \max_{\ell \notin S}
\|a_\ell^* Q_J A_S \Sigma\|_2. \label{max:cond}
\end{align}
Then by the above and a union bound the probability that SOMP
fails can be bounded by
\begin{align}
&\P(\max_{\ell \in S \setminus J} \|a_\ell^* Q_J A_S \Sigma \Phi
\|_2 \leq \max_{\ell \notin S} \|a_\ell^* Q_J A_S \Sigma \Phi\|_2)\notag\\
&\hspace*{0.2in}\leq (|S^c| + 1) \exp(-\epsilon^2 A_L^2).
\label{prob:bound:SOMP}
\end{align}
Let us consider now the maximum on the right hand side of
\eqref{max:cond}. First note that $P_J a_\ell = a_\ell$ for all
$\ell \in J$, in other words $Q_J a_\ell = 0$. Hence, we can
estimate
\begin{align}
&\hspace*{-0.2in} \max_{\ell \notin S} \|a_\ell^* Q_J A_S
\Sigma\|_2^2 = \max_{\ell \notin S}
\|\Sigma_{S\setminus J}  A_{S\setminus J}^* Q_J a_\ell\|_2^2\notag\\
&\leq \max_{\ell \notin S} \sum_{j \in S \setminus J} \sigma_j^2
|\langle Q_J a_j, a_\ell\rangle|^2 \notag\\
&\leq \max_{i \in S \setminus J}
\sigma_i^2 \max_{\ell \notin S} \sum_{j \in S \setminus J}
|\langle Q_J a_j, a_\ell\rangle|^2.    \notag
\end{align}
Furthermore, for $\ell \notin S$ we have
\begin{align}
&\hspace*{-0.2in}\left(\sum_{j \in S \setminus J} |\langle Q_J
a_j, a_\ell\rangle|^2\right)^{1/2} = \|A_{S\setminus J}^* Q_J
a_\ell\|_2\notag\\
&=\|A_{S \setminus J}^* (I - P_J) a_\ell\|_2\notag\\
&\leq \|A_{S\setminus J}^* a_\ell \|_2 + \|A^*_{S\setminus J} A_J
(A_J^* A_J)^{-1} A_J^* a_\ell\|_2 \notag\\
&\leq \mu_2(S \setminus J) +
\|A^*_{S \setminus J} A_J\|_{2} \|(A_J^*A_J)^{-1}\|_2
\|A_J^* a_\ell\|_2\notag\\
&\leq \mu_2(S) + \frac{\delta(S)}{1-\delta(S)} \mu_2(S) =
\frac{\mu_2(S)}{1-\delta(S)}, \notag
\end{align}
where we used the fact that $A_{S \setminus J}^* A_J$ is a
submatrix of $A_S^*A_S - I$.

Next we consider the maximum on the left hand side of
\eqref{max:cond}. We can estimate
\begin{align}
&\max_{\ell \in S\setminus J} \|a_\ell^* Q_J A_S \Sigma\|_2^2 =
\max_{\ell \in S \setminus J} \sum_{j \in S \setminus J}
\sigma_j^2 |\langle Q_J a_\ell, a_j\rangle|^2 \notag\\
&\hspace*{0.2in} \geq \max_{\ell \in S \setminus J} \sigma_\ell^2
\inf_{j \in S\setminus J} |\langle Q_J a_j,a_j\rangle|^2.\notag
\end{align}
Furthermore, for $j \in S \setminus J$
\begin{align}
&\hspace*{-0.7in}|\langle Q_J a_j, a_j\rangle| = |\langle (I-P_J)
a_j, a_j
\rangle| \notag\\
&= |1-a_j^* A_J (A_J^* A_J)^{-1} A_J^* a_j|\notag\\
&\geq 1 - \|A_J^* a_j\|^2 \|(A_J^*A_J)^{-1}\|_2\notag\\
&\geq 1-\mu_2(S)^2 (1-\delta(S))^{-1}.\notag
\end{align}
Combining the above estimates, condition \eqref{max:cond} is
satisfied if
\[
(1+\epsilon) \frac{\mu_2(S)}{1-\delta(S)} \geq (1-\epsilon)
\left(1-\frac{\mu_2(S)^2}{1-\delta(S)}\right),
\]
which is equivalent to \eqref{bound:SOMP}.

In order to complete the proof, we note that OMP successfully
recovers the correct signal if \eqref{max:cond} holds for all $J
\subset S$. By a union bound of \eqref{prob:bound:SOMP} over all
those $2^k$ subsets this is true with probability at least
$1 - N 2^k \exp(-\epsilon^2 A_L^2)$ 
provided condition \eqref{bound:SOMP} holds.

The extension to the complex valued case is straightforward.


\end{document}